\newcommand{\cov}{N}
\newtheorem{remark}{Remark}
\newtheorem{con}{Conjecture}
\newtheorem{theorem}{Theorem}
\newtheorem{lemma}[theorem]{Lemma}
\newtheorem{cor}[theorem]{Corollary}
\newtheorem{prop}[theorem]{Proposition}
\newtheorem{definition}{Definition}
\newenvironment{proof}{{\noindent\bf Proof.}}{$\Box$\newline}
\title{On Finding the Largest Minimum Distance of Locally Recoverable Codes}
\author{Majid Khabbazian
  \thanks{%
    M. Khabbazian is with the Department of Electrical and Computer Engineering, University of Alberta, Edmonton, Canada
    (Email: {mkhabbazian@ualberta.ca}).}
  }
\begin{document}
\maketitle
\begin{abstract}
  The $(n, k, r)$-Locally recoverable codes (LRC) studied in this work are $(n, k)$ linear codes for which the value of each coordinate can be recovered by a linear combination
  of at most $r$ other coordinates. 
  In this paper, we are interested to find the largest possible minimum distance of $(n,k,r)$-LRCs, denoted $\mathscr{D}(n,k,r)$.
  We refer to the problem of finding the value of $\mathscr{D}(n,k,r) $ as the \emph{largest minimum distance (LMD) problem}.
  LMD can be approximated within an additive term of one ---
  it is known in the literature that $\mathscr{D}(n,k,r) $ is either equal to $d^*$ or $d^*-1$, where $d^*=n-k-\left\lceil \frac{k}{r} \right\rceil +2$.  
  Also, in the literature,  LMD has been solved for some ranges of code parameters $n, k$ and $r$.
  However, LMD is still unsolved for the general code parameters.
 
  In this work, we convert LMD to a simply stated problem in graph theory, and prove that the two problems are equivalent.
  In fact, we show that solving the derived graph theory problem not only solves LMD,
  but also directly translates to construction of optimal LRCs. 
  Using these new results, we  show how to easily derive the existing results on LMD and extend them. 
  Furthermore, we show a close connection between LMD and a challenging open problem in extremal graph theory; 
  an indication that  LMD is perhaps difficult to solve for general code parameters.


%
%

\end{abstract}

\begin{IEEEkeywords}
Distributed storage, linear erasure codes, locally recoverable codes, minimum distance.
\end{IEEEkeywords}

\section{Introduction}
  Locally recoverable codes (LRCs) have recently received significant  attention because of their application in reliable distributed storage systems.
  A main characteristics of LRCs that distinguishes them from other codes is their small \emph{repair locality}, 
  a term introduced in~\cite{Gopalan12,OggierD11,PapailiopoulosLDHL12}.
  An LRC with (all-symbol) locality~$r$ is a code for which the value of every symbol of the codeword can be recovered from the values of a set of $r$ other symbols.
  As a result, when a storage node fails in distributed storage systems that uses LRC with locality $r$, only $r$ other storage nodes need to be accessed to repair the failed node.
  Smaller values of $r$ result in lower I/O complexity and bandwidth overhead to recover a single storage node failure --- the dominant failure scenario.
  Reducing $r$, however, may come at the cost of a reduction in the code's minimum distance.
  

  As in other codes, minimum distance is an important parameter of an LRC.
  A minimum distance of $d$ guarantees recovery of up to $d-1$ storage node failures, and is one of the main factors in determining the reliability of a distributed storage system.
  The following relationship between the minimum distance $d$ of an LRC, and its locality $r$ was first derived by Gopalan et al.~\cite{Gopalan12}:
  \begin{equation}
  \label{Sing}
    d\leq d^*
  \end{equation}
  where $d^*=n-k-\left\lceil \frac{k}{r} \right\rceil +2$. We call the LRC codes that achieve this bound \emph{optimal}.
  
  It is shown in the literature that for any code parameters $n$, $k$, and $r$, there is a $(n,k,r)$-LRC with minimum distance of at least $d^*-1$.
  This result together with the bound~(\ref{Sing}) raise an interesting question:  is the largest minimum distance of $(n,k,r)$-LRCs equal to $d^*$ or $d^*-1$?
  Motivated by this question, we define the following problem.
  
  \textbf{The LMD problem}
    For integers $n> k \geq r\geq 1$, let  $\mathscr{D}(n,k,r)$ denote  the largest possible minimum distance among all $(n,k,r)$-LRCs.
    We define the  \emph{largest minimum distance (LMD) problem} as the problem of finding the exact value of  $\mathscr{D}(n,k,r)$.
    Note that in this definition, there is no restriction on the code's finite field order.
        
    Throughout the paper, we set
    \begin{equation}
    \label{equ:k1k2n1n2}
    \begin{array}{ll}
      k_1 = \left\lceil \frac{k}{r} \right\rceil, &
      k_2 =k_1\cdot r -k\\
      n_1=\left\lceil \frac{n}{r+1} \right\rceil, &
      n_2=n_1\cdot (r+1) -n\\
    \end{array}
    \end{equation}


%


\subsection{Existing Results on Computing $\mathscr{D}(n,k,r)$}
\label{sec:RW}
  In the literature, there are interesting works on finding the largest minimum distance of $(n,k,r)$-LRCs accounting the order of the finite field used (e.g.~\cite{CadambeM15}).
  The LMD problem studied in this work, however, does not restrict the order of the finite field.
  Following, we enumerate the existing results that were obtained without imposing a restriction on the order of the finite field used.
  

\begin{enumerate}
  \item $\mathscr{D}(n,k,r)=d^*$ if $r=k$. This is because  MDS codes achieve ~(\ref{Sing}) with equality.

   \item  $\mathscr{D}(n,k,r)=d^*$ if $r+1 | n$~\cite{SilbersteinRV15,TamoPD16}.
   \item $\mathscr{D}(n,k,r)=d^*$ if $n \mod r+1> k \mod r>0$ \cite{SilbersteinRV15}. 
   
   \item  $\mathscr{D}(n,k,r)=d^*-1$ if $r<k$, $r|k$ and $r+1 \nmid n$~\cite{Gopalan12, SongDYL14}.     
   \item  $\mathscr{D}(n,k,r)=d^*-1$  if $n_2\geq k_2+1$ and $k_1\geq 2k_2+2$~\cite{SongDYL14}\footnote{The conditions used in~\cite{SongDYL14} 
     are converted into equivalent conditions on  $k_1$, $k_2$, $n_1$, and $n_2$}.
   \item $\mathscr{D}(n,k,r)\leq n+1-(k+l)$, where $l$ is derived from a parameter $e_m$, which is defined recursively~\cite{PrakashLK14}. 
   
   \item \label{itm:Wang}$\mathscr{D}(n,k,r)$ has a closed-form solution if  $n_2< n_1$ 
   \footnote{Song et al.~\cite{SongDYL14} prove 
     that $\mathscr{D}(n,k,r)=d^*$ under two less general cases. 
     The first case is $n_2<n_1$ \& $k_1 \leq k_2+1$. 
     The second case is $2n_2\leq n_1$ \& $k_1\leq 2k_2+1$.}~\cite{WangZ15}.

\end{enumerate}

\subsection{Our Contribution }
   Our first main contribution is Theorem~\ref{thm:main} which converts the LMD problem to an equivalent simply stated problem in graph theory. 
  Recall that parameters $n_1$, $n_2$, $k_1$, and $k_2$ are defined in (\ref{equ:k1k2n1n2}).    

\pagebreak
   
   \begin{theorem}
   \label{thm:main}
      $\mathscr{D}(n, k, r)=d^*$ iff there is a multigraph\footnote{ All the graphs considered in this paper are assumed to be loopless.}
      of order $n_1$ and size $n_2$ that does not have any subgraph of order $k_1$ and size 
      greater than  $k_2$.\\
      Furthermore, any such multigraph directly translates into construction of an optimal $(n,k,r)$-LRC over a finite field of order $\mathcal{O}(n^{d^*})$.
    \end{theorem}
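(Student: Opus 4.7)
My plan is to set up a dictionary between optimal $(n,k,r)$-LRCs and multigraphs. The vertices of the multigraph will correspond to a chosen collection of $n_1$ local parity check vectors of the dual code, each of weight at most $r+1$, whose supports jointly cover all $n$ coordinates; each coordinate that belongs to two of these supports contributes one edge between the corresponding vertices (two shared coordinates between the same pair of parities give a multi-edge, hence a multigraph). A double-counting argument, $\sum_v |\mathrm{supp}(h_v)| \le n_1(r+1)$ on one side and $\sum_v |\mathrm{supp}(h_v)| = n + \#\{\text{shared coordinates}\}$ on the other, forces the number of edges to equal $n_2$ whenever each parity has full weight $r+1$ and no coordinate sits in more than two parities.

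For the forward direction ($\Rightarrow$), I would first argue that any optimal LRC admits such a set of $n_1$ local parities that are (i) linearly independent, (ii) all of weight exactly $r+1$, and (iii) have each coordinate in at most two of them. Granting this, fix a subset $S$ of $k_1$ vertices and let $C(S) = k_1(r+1) - |E(S)|$ be the number of coordinates touched. If $|E(S)| \ge k_2 + 1$ then $C(S) \le k + k_1 - 1 = n - d^* + 1$, so I can pick $T \supseteq C(S)$ with $|T| = n - d^* + 1$. Minimum distance $d^*$ forces $\dim \mathcal{C}|_T = k$, but the $k_1$ independent local parities of $S$ are supported inside $T$, giving $\dim \mathcal{C}|_T \le |T| - k_1 = k - 1$, a contradiction. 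Hence no $k_1$-vertex subset carries more than $k_2$ edges.

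For the backward direction ($\Leftarrow$) and the explicit construction, I reverse the dictionary: assign $r+1$ slots to each vertex of $G$, glue pairs of slots according to the $n_2$ edges to obtain $n$ coordinates, and build a parity check matrix with $n_1$ local rows (one per vertex, supported on its slots) together with $n - k - n_1$ global rows whose entries are generic, Vandermonde-style evaluations over a finite field $\mathbb{F}_q$. To check $d \ge d^*$, take any $T$ with $|T| = k + k_1 - 1$ and let $S_T$ be the set of vertices whose slots all lie in $T$. The subgraph hypothesis prevents $|S_T| \ge k_1$: otherwise any $S' \subseteq S_T$ of size $k_1$ would satisfy $C(S') \ge k_1(r+1) - k_2 = k + k_1$, yet $C(S') \subseteq T$ has at most $k + k_1 - 1$ coordinates. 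Hence at most $k_1 - 1$ local parities are supported in $T$, and by genericity no nontrivial combination involving global rows is supported in $T$ either, giving $\dim \mathcal{C}|_T = |T| - (k_1 - 1) = k$. Combined with the Singleton-like upper bound~(\ref{Sing}) this yields $d = d^*$.

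The main obstacle is twofold. First, in the forward direction, extracting a canonical family of $n_1$ weight-$(r+1)$ linearly independent local parities in which every coordinate is covered at most twice is the delicate technical step; one must use optimality (the exact equality $d=d^*$) to exclude configurations where some coordinate is in three or more local parities or where some parity has weight less than $r+1$. Second, on the construction side, showing that a single choice of field of size $\mathcal{O}(n^{d^*})$ simultaneously satisfies all of the needed rank conditions reduces to a Schwartz--Zippel/polynomial-method argument on the $\binom{n}{d^*-1}$ relevant minors of the symbolic parity check matrix, which yields exactly the stated field-size bound.
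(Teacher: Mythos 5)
Your backward direction is essentially aligned with the paper's (build the bipartite incidence structure from the multigraph, attach $n-k-n_1$ generic global parities, and use Hall's theorem plus Schwartz--Zippel over a field of size $\mathcal{O}(n^{d^*})$), and the arithmetic $k_1(r+1)-k_2 = k+k_1$ is correct. The genuine problem is the forward direction, which you have not proved but only described as a ``delicate technical step.''

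Your plan there is to show that an optimal LRC admits, \emph{as given}, a family of $n_1$ linearly independent local parities, each of weight exactly $r+1$, with every coordinate in at most two of them, by using optimality to ``exclude'' the bad configurations (weight $< r+1$, or triple coverage). This is not how the paper argues, and I do not think the exclusion claim is true as stated: an optimal code may well have local parities of weight $< r+1$, or a coordinate sitting in three or more low-weight dual vectors, and nothing about $d=d^*$ forbids it. What the paper does instead is \emph{transform} rather than \emph{exclude}: it passes from the LRC to its Tanner graph (Proposition~\ref{prp:nkrLRC}), pads check degrees up to $r+1$ or $n$ to get a full Tanner graph (which can only raise the combinatorial minimum distance of Definition~\ref{def:MD}), prunes to drop global check nodes and degree-one variable nodes (Proposition~\ref{prp:P2F} shows this leaves the minimum distance well-defined via Lemma~\ref{lem:Etps}), and then runs two explicit surgeries --- Lemma~\ref{lem:CNR} to shrink the check-node count to exactly $n_1$, and the edge-rewiring in Proposition~\ref{prp:refined} to drive every variable-node degree down to exactly $2$ --- each proved, set-by-set, not to decrease $|\cov(S)|$. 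The object you end up with is not the original code's parity structure but a new Tanner graph with at least as good a distance, which is all the ``iff'' needs. Notably, in the paper's refined pruned graph the check nodes typically have degree \emph{strictly less} than $r+1$ (there are only $n_2\le r$ variable nodes left), so even condition (ii) of your canonical family is not something the final object satisfies; the weight-$(r+1)$ picture only reappears after the P2F conversion pads the pruned graph back out with degree-one variables. Until you either supply a proof that the original code's dual genuinely contains such a canonical family (which seems false in general), or replace the exclusion step with a distance-preserving transformation argument of the kind sketched above, the forward implication is not established.
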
    
   As will be explained next, the first six related work (listed in subsection~\ref{sec:RW}) can be easily derived from Theorem~\ref{thm:main}.
   Also, the main result of~\cite{WangZ15} (Item~\ref{itm:Wang} in the list) can be derived with moderate effort.


  \begin{enumerate}
    \item \textbf{Case $\mathbf{r=k}$}: This is equivalent to $k_1=1$.
        Clearly, the size of every $(k_1=1)$-vertex subgraph of a multigraph is zero, which is obviously bounded by $k_2$. 
        Therefore,  $\mathscr{D}(n,k,r)=d^*$ by Theorem~\ref{thm:main}.
        Using Theorem~\ref{thm:main}, we can easily extend this result to $k_1=2$.
        \begin{cor}
            Suppose  $k_1=2$.
            Then, $\mathscr{D}(n, k, r)=d^*$ iff 
            \[
            n_2 \leq {n_1 \choose 2}\cdot k_2,
            \]    
        \end{cor}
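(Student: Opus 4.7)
The plan is to apply Theorem~\ref{thm:main} directly in the specialized case $k_1 = 2$ and reduce the graph-theoretic condition to a simple counting inequality. The key observation is that when $k_1 = 2$, a subgraph of order $k_1$ consists of exactly two vertices together with the edges between them, so the \emph{size} of any order-$2$ subgraph of a multigraph $G$ equals the edge multiplicity between the chosen pair of vertices.

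First, I would rewrite the condition from Theorem~\ref{thm:main} in this specialized form: a multigraph of order $n_1$ and size $n_2$ has no subgraph of order $2$ and size greater than $k_2$ if and only if every pair of vertices in the multigraph is joined by at most $k_2$ parallel edges. So the question reduces to whether there exists a multigraph on $n_1$ vertices, with exactly $n_2$ edges, and with maximum edge multiplicity at most $k_2$.

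Second, I would observe that the maximum total number of edges in any multigraph on $n_1$ vertices with edge multiplicity capped at $k_2$ is clearly $\binom{n_1}{2} \cdot k_2$, achieved by placing exactly $k_2$ parallel edges between every unordered pair of vertices. Hence such a multigraph with exactly $n_2$ edges exists if and only if $n_2 \leq \binom{n_1}{2} \cdot k_2$ (for the ``if'' direction one simply removes edges from the maximal configuration until $n_2$ edges remain, preserving the multiplicity cap). Combining this with Theorem~\ref{thm:main} yields the corollary.

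There is no real obstacle here; the only step requiring any thought is confirming that ``subgraph of order $k_1$'' in Theorem~\ref{thm:main} refers to a vertex-induced collection of edges (so that size equals the sum of multiplicities between those vertices), which is the natural reading. Once that is settled, the proof is a one-line counting argument.
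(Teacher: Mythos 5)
Your argument is correct and is essentially the same as the paper's: the paper's one-line proof likewise identifies the multigraph with exactly $k_2$ parallel edges between every pair as the extremal example, giving $\binom{n_1}{2}k_2$ as the maximum feasible size. Your worry about the meaning of ``subgraph of order $k_1$'' is harmless since the maximum-size subgraph on any fixed vertex set is the induced one, so the two readings give the same condition.
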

        \begin{proof}
          A $n_1$-vertex multigraph with $k_2$ edges between any of its two vertices has the maximum size among all $n_1$-vertex 
          multigraphs that satisfy the condition of Theorem~\ref{thm:main}.
        \end{proof}        
        
    \item \textbf{Case $\mathbf{r+1|n}$}: This is equivalent to $n_2=0$. 
      The size of any subgraph of a multigraph of size $n_2=0$ is zero, hence bounded by $k_2$.
      Therefore, $\mathscr{D}(n,k,r)=d^*$ by Theorem~\ref{thm:main}.
      
    \item \textbf{Case $\mathbf{\left(n \mod r+1 \right) >  \left( k \mod r\right) >0}$}: This case is equivalent to $k_2> n_2>0$.   
      Clearly, the size of any subgraph of a multigraph of size $n_2$ is at most $n_2$. Since $n_2<k_2$ in this case, by  Theorem~\ref{thm:main},
      we get $\mathscr{D}(n,k,r)=d^*$. In fact, by Theorem~\ref{thm:main}, this result still holds if $k_2=n_2$.
      Therefore, with this little extension, we get $\mathscr{D}(n,k,r)=d^*$ if $\left(n \mod r+1 \right) \geq  \left( k \mod r\right) >~0$.

    \item \textbf{Case $\mathbf{r<k}$, $\mathbf{r|k}$ and $\mathbf{r+1 \nmid n}$}: This is equivalent to $k_1\geq 2$, $k_2= 0$ and $n_2\geq 1$, respectively. 
      Since $r<k$, and $k<n$, we get $r+1<n$, thus $n_1\geq 2$. 
      Clearly, any $(n_1\geq 2)$-vertex multigraph of size $n_2\geq 1$ always has a $(k_1\geq 2)$-vertex subgraph of size
      greater than $k_2=0$. Thus, by Theorem~\ref{thm:main} we get that $\mathscr{D}(n,k,r)\neq d^*$, which implies $\mathscr{D}(n,k,r)=d^*-1$.

    \item  \textbf{Case $\mathbf{n_2\geq k_2+1}$ \& $\mathbf{k_1\geq 2k_2+2}$}:      
      Let $G$ be any multigraph of size $n_2$. Pick $k_2+1$ edges of $G$. The result is a subgraph of order at most $2k_2+2\leq k_1$ and size grater than $k_2$.
      Therefore, any multigraph of size $n_2\geq k_2+1$ has a $k_1$-vertex subgraph of size greater than $k_2$.
      Thus, by Theorem~\ref{thm:main}, we get $\mathscr{D}(n,k,r)=d^*-1$.

     \item  \textbf{Case $\mathbf{\mathscr{D}(n,k,r)\leq n+1-(k+l)}$}:
       Since $\mathscr{D}(n,k,r)\geq d^*-1$, the only advantage of this upper bound --- or any other upper bound on $\mathscr{D}(n,k,r)$ --- 
       over~(\ref{Sing}) is  when the right side of the inequality becomes equal to $d^*-1$;
       that is exactly when the inequality implies $\mathscr{D}(n,k,r)=d^*-1$.
       In the above case, this happens iff 
       \begin{equation}
       \label{equ:t_k_1}
         t_{k_1}>k_2,
       \end{equation} 
       where 
      \begin{equation}
      \label{equ:t_m}
        t_{m-1}=t_m-\left\lceil \frac{2t_m}{m}\right\rceil, \quad 2\leq m \leq n_1, \quad t_{n_1}=n_2,
      \end{equation}
      is a recursive equation obtained for the one defined in~\cite{PrakashLK14} by substituting their parameter $e_m$ with $t_m=m(r+1)-e_m$.
     Let $G$ be any $n_1$-vertex multigraph of size $n_2$.
     Let $T_{n_1}=G$ and $T_{m-1}$,  $2\leq m \leq n_1-1$, be the $(m-1)$-vertex graph obtained from $T_m$ by removing its vertex with the smallest degree.  
     Since the smallest degree of  $T_m$ is at most equal to $\left\lceil \frac{2t_m}{m}\right\rceil$, by~(\ref{equ:t_m}) we get that the size of $T_{m-1}$ is at least $t_{m-1}$.
     Therefore, $t_m$ is an upper bound on the size of graph $T_m$.
     Thus, the condition~(\ref{equ:t_k_1}) means that the size of $T_{k_1}$ (which is a $k_1$-vertex subgraph of $G$) is greater than $k_2$.
     By Theorem~\ref{thm:main}, we then get $\mathscr{D}(n,k,r)\neq d^*$,  which implies $\mathscr{D}(n,k,r)= d^*-1$.

     By the above proof, an improvement over the upper bound of~\cite{PrakashLK14} is obtained by replacing  $\left\lceil \frac{2t_m}{m}\right\rceil$
     with $\left\lfloor \frac{2t_m}{m}\right\rfloor$ in~(\ref{equ:t_m}) --- note that  $\left\lfloor \frac{2t_m}{m}\right\rfloor$ is a better upper bound on the smallest degree of $T_m$. 

%
%

    \item \textbf{Case $\mathbf{n_2<n_1}$}:   
        Using Theorem~\ref{thm:main}, we can also solve LMD for this case. 
        The intuition is as follows.
        Let us define \emph{$k$-density} of a multigraph as the maximum size of any of its $k$-vertex subgraphs.
        To solve LMD, we need a multigraph with minimum 
        $k_1$-density among all the $n_1$-vertex graphs of size $n_2$.
        Let us call such a multigraph \emph{$k_1$-dense}.
        It is not hard to show that a forest with almost equally sized trees (i.e. with trees whose order differ by at most one) is 
        always $k_1$-dense. To extend the result of~\cite{WangZ15} a bit further, one can show that a cycle graph is $k_1$-dense
        when $n_2=n_1$.
        This observation extends the result of~\cite{WangZ15} from the case $n_2 < n_1$ to $n_2 \leq n_1$.       
        
        Instead of providing the technical details for the above intuition, we solve LMD for a similar case:  $k_2<k_1-1$.
        The reasons for doing so are 1) the case  $k_2<k_1-1$ is solved using a similar technique and graphs (forests with almost equally sized trees);
        2) this is a new case; 3) unlike the case $n_2<n_1$, which we showed that can be extended to $n_2\leq n_1$, the new case $k_2<k_1-1$
        cannot be extended to $k_2\leq k_1-1$; as we prove later, LMD for the case $k_2=k_1-1$ is closely connected to
        a challenging problem in extremal graph theory. 
                
        \begin{theorem}
        \label{thm:k_2k_1minus1}
             Suppose $k_2 < k_1-1$. 
             Then, $\mathscr{D}(n, k, r)=d^*$ iff
             \[
               n_2\leq n_1- \left( \left\lceil \frac{n_1-k_1+1}{\left\lfloor \frac{k_1}{k_1-k_2-1}\right\rfloor} \right\rceil +k_1-k_2-1\right).
             \]    
        \end{theorem}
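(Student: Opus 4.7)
My plan is to apply Theorem~\ref{thm:main}, which converts the statement into a purely extremal question: $\mathscr{D}(n,k,r)=d^*$ iff there exists a multigraph on $n_1$ vertices with $n_2$ edges in which every $k_1$-vertex subgraph has at most $k_2$ edges. Writing $s:=k_1-k_2-1\ge 1$, the proof will split into a reduction to forests, a translation of the density constraint into an inequality on component sizes, and an extremal computation.

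First, I would show that we may assume the extremal multigraph is a forest. A multi-edge forces two edges into a 2-vertex subgraph, and any cycle can be broken by deleting a cycle edge and re-attaching a fresh edge as a pendant at a low-degree vertex elsewhere. A routine verification confirms that this swap keeps the edge count the same and never raises the maximum size of a $k_1$-subgraph, so iterating we may take the extremal example to be a forest.

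Second, for a forest with components $T_1,\dots,T_t$ of orders $s_1\ge\cdots\ge s_t$, the induced subgraph on a $k_1$-subset $S$ is a forest of order $k_1$ with exactly $k_1-c(S)$ edges, where $c(S)$ counts its components. An adversary maximizing edges packs $S$ into the largest available components and takes connected pieces within each, so $\max_S|E(G[S])|=k_1-j^*$, where $j^*$ is the smallest index with $s_1+\cdots+s_{j^*}\ge k_1$. The density condition collapses to the single inequality $s_1+\cdots+s_s\le k_1-1$.

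Third, I would solve the resulting discrete optimization: minimize $t$ (equivalently, maximize $n_2=n_1-t$) subject to $\sum s_i=n_1$, the ordering $s_i\ge s_{i+1}$, and the top-$s$ budget. The best strategy saturates the top-$s$ budget with $s$ trees whose smallest has size $\lfloor k_1/(k_1-k_2-1)\rfloor$, then tiles the remaining $n_1-k_1+1$ vertices with further trees of that same size; counting components yields $t=s+\lceil(n_1-k_1+1)/\lfloor k_1/(k_1-k_2-1)\rfloor\rceil$, matching the statement. The converse follows by pigeonhole: a forest with fewer components necessarily has its top $s$ trees summing to at least $k_1$, yielding a violating $k_1$-subset.

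The main obstacle will be the forest reduction: justifying rigorously that the cycle-breaking and multi-edge-removal swaps can always be carried out without creating a new tight $k_1$-subset requires a delicate choice of where to re-attach the displaced edge, particularly when every low-degree vertex already sits inside some near-extremal subset. The ceiling/floor bookkeeping in the final step is purely arithmetic but needs a case split on the divisibility of $k_1$ by $s$ and of $n_1-k_1+1$ by $\lfloor k_1/s\rfloor$.
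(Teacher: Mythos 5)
Your high-level plan is the same as the paper's: invoke Theorem~\ref{thm:main}, reduce to forests, translate the $\mathscr{F}_{k_1,k_2}$-free condition into the single inequality $s_1+\cdots+s_s\le k_1-1$ on the $s=k_1-k_2-1$ largest component orders, and then optimize over almost-equal path forests. Your second and third paragraphs are essentially the paper's Appendix~\ref{app:k_2k_1minus1} argument, and the arithmetic at the end matches.

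The gap is the forest reduction, and you correctly flag it yourself. Your proposed mechanism --- delete a cycle edge and ``re-attach a fresh edge as a pendant at a low-degree vertex elsewhere'' --- is underspecified and does not obviously terminate without creating a new violating $k_1$-subset; furthermore, the opening claim that a multi-edge is immediately disqualifying (because it puts two edges into a $2$-vertex subgraph) is only correct when $k_2<2$, whereas the hypothesis $k_2<k_1-1$ allows $k_2\geq 2$, so multi-edges \emph{can} occur in $\mathscr{F}_{k_1,k_2}$-free multigraphs. The paper sidesteps both issues with a global replacement rather than local edge surgery: after ordering the components so that the non-tree ones come first, it picks the minimal prefix $G_1,\ldots,G_t$ whose total edge count is exactly one less than its total vertex count, replaces this entire prefix by a single path of the same order and size (which automatically eliminates all multi-edges and cycles there), and then proves $\mathscr{F}_{k_1,k_2}$-freeness of the result by a swap argument. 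The key lemma powering the swap is that, by minimality of $t$, for every $h\le\sum_{i\le t}|V_i|-1$ the prefix contains an $h$-vertex subgraph with at least $h-1$ edges; so any $h$ path-vertices appearing in a hypothetical violating $k_1$-set of $G'$ can be traded for $h$ prefix vertices of $G$ inducing at least as many edges, producing a violating set in $G$ itself --- a contradiction. You would need to replace your local cycle-breaking heuristic with an argument of this kind (or some other global one) to close the proof.
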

        \begin{proof}
            Appendix~\ref{app:k_2k_1minus1}.
        \end{proof}

  \end{enumerate}

   In addition to the above results --- using Theorem~\ref{thm:main} and tools from graph theory 
   such as Tur\'an's graph, and graph realization  --- we solve LMD for more cases (Theorem~\ref{thm:Mantel},~\ref{thm:turan}, and~\ref{thm:real}).
   Also, we prove a close connection between a special case of LMD and a challenging problem in extremal graph theory (Theorem~\ref{thm:LMDexLk}).        

\pagebreak
    \textbf{Remainder of this paper} Section~\ref{sec:Pre} covers some main definitions and basic tools needed in the rest of the paper.
    We present our main results in Section~\ref{sec:Main} --- the  results already proved above are not included in Section~\ref{sec:Main}.
    We conclude the paper with some possible future work in Section~\ref{sec:Con}.
%


\section{Preliminaries}
\label{sec:Pre}

  \textbf{Tanner Graphs} A $(n,k)$-linear block code can be represented by a Tanner graph.
  As shown in Figure~\ref{fig:Tanner}, a $(n,k)$-Tanner graph is a bipartite graph with two kinds of nodes: $n$ variable vertices 
  ($v_i$, $i\in[n]$) shown by circles, 
  and $n-k$ check nodes ($c_j$, $j\in [n-k]$) shown by squares. 
  The $n$ variable nodes represent the codeword symbols.
  The $n-k$ check nodes represent the parity-check equations: 
  the variable node $v_i$ is connected to the check node $c_j$ iff $\mathbf{H}[i,j]\neq 0$, where $\mathbf{H}_{(n-k)\times n}$ is the code's parity-check matrix.
  Therefore, the set of variable nodes incident to a check node are linearly dependent. 
  
\begin{figure}[htbp]
\begin{center}
  \includegraphics[scale=0.75]{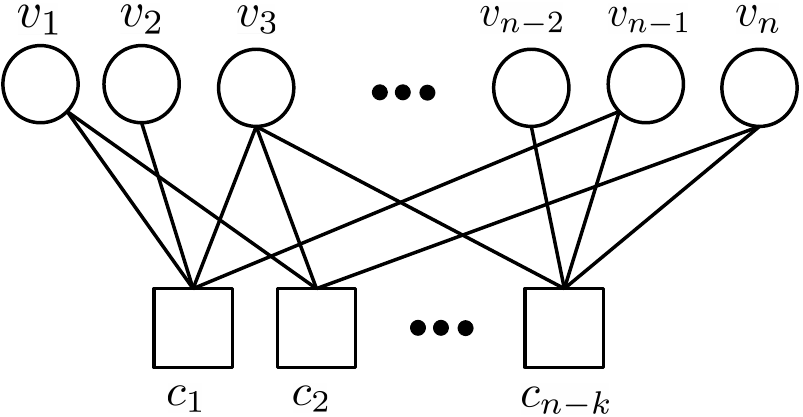}
\caption{A Tanner graph with $n$ variable nodes and $n-k$ check nodes.}
\label{fig:Tanner}
\end{center}
\end{figure}

  \begin{definition}[$(n,k,r)$-Tanner Graph]
    A $(n,k,r)$-Tanner graph is a $(n,k)$-Tanner graph in which every variable node is incident to at least one check node of degree at most $r+1$.
  \end{definition}

  \begin{definition}[$(n,k,r)$-Full Tanner Graph]
    A $(n,k,r)$-full Tanner graph is a $(n,k,r)$-Tanner graph in which the degree of each check node is either $r+1$ or $n$.
  \end{definition}

  \begin{definition}[Local and Global Check Nodes]
    In a $(n,k,r)$-full Tanner graph, a check node is called \emph{local check node} if its degree is $r+1$;
    otherwise it is called \emph{global check node}.
  \end{definition}  
  
  By the above definitions, each variable node in a  $(n,k,r)$-full Tanner graph is adjacent to at least one local check node.
  Therefore, the number of local check nodes of a  $(n,k,r)$-full Tanner is at least $\left\lceil \frac{n}{r+1} \right\rceil=n_1$. 

%
%
%
%
  
  \textbf{Minimum Distance} The minimum distance of a code is the minimum Hamming distance between any two distinct codewords. 
  Following, we extend the definition of minimum distance to Tanner graphs. 
  We then explain the connection between the minimum distances of a LRC and its corresponding Tanner graph.  

  \begin{definition}[Tanner Graph's Minimum Distance]
  \label{def:MD}
    The minimum distance of a $(n,k)$-Tanner graph is defined as the largest integer $d\in [1,n-k]$ for which we have the following property:
    for any integer $\eta \in [n-k-d+2 , n-k]$, every set of $\eta$ check nodes are adjacent to at least $\eta + k$ variable nodes.
  \end{definition}
  
  Note that the above definition applies to $(n,k,r)$-Tanner graphs and $(n,k,r)$-full Tanner graphs, because they are both $(n,k)$-Tanner graphs.

  \begin{prop}
  \label{prp:nkrLRC}
    There is a $(n,k,r)$-LRC with minimum distance $d^*$ iff there is a $(n,k,r)$-Tanner graph with minimum distance $d^*$.
  \end{prop}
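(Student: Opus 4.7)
My plan is to prove both directions by passing through the \emph{generic} minimum distance of any matrix realizing the Tanner graph's zero pattern, and to identify this quantity with the expansion condition of Definition~\ref{def:MD}. First, by a complementation argument (setting $V = [n] \setminus N(R)$ and conversely $R = [n-k] \setminus N(V)$) one verifies that ``$|N(R)| \geq |R| + k$ for every check set with $|R| \in [n-k-d+2, n-k]$'' is equivalent to ``$|N(V)| \geq |V|$ for every variable set with $|V| \leq d-1$''. By the classical result that the generic rank of a matrix with a prescribed zero pattern equals the maximum matching in the associated bipartite graph, Hall's theorem then identifies this second form as ``every $d-1$ columns of a generic realization $H$ are linearly independent'', i.e., that the resulting code has generic minimum distance at least $d$. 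Thus the Tanner graph's minimum distance equals the generic minimum distance of any of its matrix realizations.

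For the $(\Rightarrow)$ direction, given a $(n,k,r)$-LRC $C$ of minimum distance $d^*$, locality supplies, for each coordinate $i$, a codeword of $C^\perp$ of weight at most $r+1$ containing $i$. Gathering such rows, which together cover every column, and completing them to a basis of $C^\perp$ yields a rank-$(n-k)$ parity-check matrix $H$ whose Tanner graph $T$ is $(n,k,r)$ by construction. Since the specific $H$ achieves min distance $d^*$ and the specific min distance is bounded above by the generic min distance, $T$ has min distance at least $d^*$; applying the Singleton-type bound~(\ref{Sing}) to a generic realization of $T$ (which is itself a $(n,k,r)$-LRC of dimension $k$, because the specific rank $n-k$ of $H$ forces generic rank $n-k$) forces equality.

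For the $(\Leftarrow)$ direction, start with a $(n,k,r)$-Tanner graph $T$ of min distance $d^*$ and let $H$ be an $(n-k)\times n$ matrix whose nonzero pattern is $T$, with entries drawn independently from $\mathbb{F}_q^\times$. Locality of the resulting code is immediate from the degree-$(r+1)$ local check nodes covering every variable. The expansion condition, together with the equivalence established above, certifies that every relevant column-independence minor, as well as the $(n-k) \times (n-k)$ minor witnessing full row rank, is a nonzero polynomial in the entries; a Schwartz--Zippel union bound over the $\binom{n}{d^*-1}$ column subsets then succeeds over any field of size $\Omega(n^{d^*})$, matching the field-size claim in Theorem~\ref{thm:main}.

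The main technical obstacle will be ensuring the full row rank $n-k$ in the $(\Leftarrow)$ direction: with $d=d^*$ the expansion only supplies $|N(R)| \geq |R|+k$ for $|R| \geq \lceil k/r \rceil$, and so does not directly imply Hall's condition on the check side for small $|R|$. I expect to resolve this by exploiting the at least $n_1$ local check nodes guaranteed by the $(n,k,r)$-Tanner graph structure: their low-degree skeleton is forced to cover all $n$ variables, so a Hall argument at the local-check layer produces a matching saturating the local checks, which combined with the large-$|R|$ expansion extends to a matching saturating the entire check side and hence yields the needed row rank.
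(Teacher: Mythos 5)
Your overall route matches the paper's: both directions hinge on Hall's theorem applied to the check-side expansion condition, the Schwartz--Zippel lemma with a union bound over $\binom{n}{d^*-1}$ variable subsets, and the construction/extraction of a parity-check matrix with a prescribed zero pattern. Framing everything in terms of the ``generic minimum distance'' of the pattern is a pleasant way to organize the argument, but it is not a genuinely different method.

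The one place you diverge is the full-row-rank concern in the $(\Leftarrow)$ direction, and there your proposed repair is both unnecessary and on shaky ground. Definition~\ref{def:MD} with $d=d^*$ gives $|N(S)|\geq|S|+k$ only for check sets with $|S|\geq k_1$; nothing in the definition of a $(n,k,r)$-Tanner graph forbids a low-degree (even degree-zero) check node, so Hall's condition on the check side for small $|S|$ is not automatic, and the fact that the local-check skeleton \emph{covers} all $n$ variables gives coverage, not a system of distinct representatives. The clean fix is the Singleton-type bound itself. After Schwartz--Zippel, the generic realization $H$ defines a code $C=\ker H$ with $d(C)\geq d^*$ and locality at most $r$ (locality is read off the degree-$\leq r+1$ check rows, regardless of the rank of $H$). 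If $\mathrm{rank}(H)<n-k$, then $\dim C=k'>k$, and applying~(\ref{Sing}) to $C$ gives
\[
d(C)\;\leq\; n-k'-\left\lceil \tfrac{k'}{r}\right\rceil+2\;\leq\; n-(k+1)-\left\lceil \tfrac{k}{r}\right\rceil+2\;=\;d^*-1,
\]
contradicting $d(C)\geq d^*$. So $\mathrm{rank}(H)=n-k$ is already forced by the distance guarantee, with no separate matching argument needed. A parallel caveat applies to your $(\Rightarrow)$ direction: ``gather the local dual words and complete to a basis'' can overshoot $n-k$ rows, and an arbitrary completion to a basis of $C^\perp$ may drop exactly the low-weight rows you collected. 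The paper's choice --- pick the parity-check matrix maximizing the number of rows of weight at most $r+1$, then argue by basis exchange that every coordinate must be covered by such a row --- is the step that makes this rigorous, and your write-up should incorporate it.
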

  \begin{proof}
    Appndix~\ref{app:nkrLRC}.
  \end{proof}
  
  A $(n,k,r)$-Tanner graph can be easily converted to a  $(n,k,r)$-full Tanner graph by adding edges to the check nodes:
  if the degree of a check nodes is strictly less than $r+1$,  add enough edges to it to make its degree equal to $r+1$;
  on the other hand, if the degree of a node is strictly more than $r+1$, we add enough edges to make its degree equal to $n$.
  By Definition~\ref{def:MD}, adding edges does not reduce the the minimum distance of a 
  Tanner graph\footnote{Adding edges may increase the minimum distance of a Tanner graph.}.
  Therefore, we get the following result using Proposition~\ref{prp:nkrLRC}.
  \begin{cor}
    There is a $(n,k,r)$-LRC with minimum distance $d^*$ iff there is a $(n,k,r)$-full Tanner graph with minimum distance $d^*$.
  \end{cor}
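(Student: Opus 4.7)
The plan is to deduce the corollary directly from Proposition~\ref{prp:nkrLRC} using the observation that any $(n,k,r)$-Tanner graph can be padded into a $(n,k,r)$-full Tanner graph by edge insertions that do not decrease the minimum distance. The reverse implication is essentially tautological: every $(n,k,r)$-full Tanner graph is, by definition, a $(n,k,r)$-Tanner graph, so the existence of such a graph with minimum distance $d^*$ immediately yields, via Proposition~\ref{prp:nkrLRC}, a $(n,k,r)$-LRC with the same minimum distance.

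For the forward implication, I would start with an LRC attaining minimum distance $d^*$, apply Proposition~\ref{prp:nkrLRC} to obtain a corresponding $(n,k,r)$-Tanner graph $T$, and then explicitly build a full Tanner graph $T'$ from $T$ by the two-case procedure described in the paragraph preceding the corollary: pad each under-filled check node (with degree below $r+1$) with new edges to non-neighboring variable nodes until its degree reaches exactly $r+1$, and saturate each over-filled check node (with degree above $r+1$) with new edges until its degree reaches $n$. A short verification shows $T'$ is full, and that every variable node still has a local neighbor in $T'$: any check node of $T$ that had degree $\leq r+1$ is brought by the procedure to degree exactly $r+1$, and its original incidences are preserved.

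The only step with real content is confirming that passing from $T$ to $T'$ does not drop the minimum distance below $d^*$. For this I would appeal directly to the shape of Definition~\ref{def:MD}: the defining property is a lower bound on the size of the variable-node neighborhood of every set of $\eta$ check nodes, and adding edges can only enlarge such neighborhoods. Hence the property is monotone under edge additions, so $T'$ inherits a minimum distance of at least $d^*$; combining this with the Singleton-like bound~(\ref{Sing}), applied to the LRC one can extract from $T'$ via Proposition~\ref{prp:nkrLRC}, pins the minimum distance of $T'$ to exactly $d^*$. I expect this monotonicity observation, though elementary, to be the single substantive ingredient of the argument.
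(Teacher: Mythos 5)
Your proposal matches the paper's argument essentially step for step: both invoke Proposition~\ref{prp:nkrLRC}, convert a $(n,k,r)$-Tanner graph to a full one by padding check nodes to degree $r+1$ or $n$, and observe that the minimum-distance property in Definition~\ref{def:MD} is monotone under edge insertion. The only cosmetic difference is that you explicitly invoke the Singleton-type bound~(\ref{Sing}) to pin the distance to exactly $d^*$, a point the paper leaves implicit; this is fine but not a different route.
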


%
  

   \textbf{Pruned Graphs}
     As will be explained shortly, we prune a $(n,k,r)$-full Tanner graph by removing some of its edges/nodes to obtain a subgraph
     which we refer to as \emph{$(n,k,r)$-pruned graph}.
     Similar to Tanner graphs, a pruned graph is a bipartite graph with variable nodes and check nodes.
     However, we do not use the term Tanner for these graphs.
     It is because, in a pruned graph, variable nodes connected to a check node are not necessarily dependent.
     For this reason, the minimum distance defined for Tanner graphs (Definition~\ref{def:MD}) does not apply to pruned graphs.

     
   Before explaining how to convert a $(n,k,r)$-full Tanner graph to a $(n,k,r)$-pruned graph, and vice versa, let us formally define $(n,k,r)$-pruned graphs.

  \begin{definition}[$(n,k,r)$-Pruned Graph]
  \label{def:PG}
    A $(n,k,r)$-pruned graph is a subgraph of a $(n,k,r)$-full Tanner graph with the following properties:
     \begin{enumerate}
       \item it has $m$, $0\leq m\leq n$,  variable nodes;
       \item it has $h$,  $n_1 \leq h\leq n-k$  check nodes;
       \item the degree of each check node is at most $r+1$;
       \item the degree of each variable node is at least two;
       \item the number of its edges is equal to $h(r+1)-(n-m)$.
     \end{enumerate}   
     Note that, by the above definition, a pruned graph may not have any variable nodes.
     Also, the degree of a check node in a pruned graph can be zero.
  \end{definition}  

  \bigskip
  \textbf{F2P Conversion: Converting a Full Tanner Graph to a Pruned Graph}  
    \begin{enumerate}
      \item remove all the global check nodes of the full Tanner graph;
      \item remove the variable nodes of degree one.
    \end{enumerate}
    
    The obtained graph is a pruned graph as it satisfies all the properties enumerated in Definition~\ref{def:PG}.
    For example, it has at least $n_1$ check nodes because local check nodes of the full Tanner graph are not removed.
    Also, the degree of its variable nodes is at least two.
    It is because there will be no variable node of degree zero after all the global check nodes are removed, 
    since every variable node is connected to at least one local check node.
    Therefore, after removing variable nodes of degree one, all the remaining variable nodes will have a degree of at least two.

  \bigskip  
  \textbf{P2F Conversion: Converting a Pruned Graph to a Full Tanner Graph }   
     \begin{enumerate}
       \item $n-m$ variable nodes of degree zero is added to the pruned graph; this increases the number of variable nodes to $n$. 
       \item iterating through the newly added variable nodes one by one, 
         we put one edge between the variable node and a check node whose degree up to that point is less than $r+1$.
         This continues until we get to the last added variable node.
         Since the number of edges of the pruned graph is $h(r+1)-(n-m)$, and the number of variable nodes added is $n-m$, 
         by the end of the above iterative process, each new variable node will be connected to a check node,
         and the degree of every check node will become $r+1$.
       \item $(n-k)-h$ global check nodes that connect to all variable nodes (including the new ones) are added.
     \end{enumerate}

  \begin{definition}[Pruned Graph's Minimum Distance]
  \label{def:PMD}
    The minimum distance of a pruned graph is defined to be equal to the minimum distance of a full Tanner graph obtained from it
    through the above P2F conversion.
  \end{definition}

   \begin{remark}
       The second step of the P2F conversion is nondeterministic; when adding an edge,
       any check node of degree less than $r+1$ can be selected. 
       Consequently, if a full Tanner graph is converted to a pruned graph and then converted back to a full Tanner graph, the result
       may be different from the original full Tanner graph.  
       Nevertheless, we will prove (Proposition~\ref{prp:P2F}) that all the full Tanner graphs that can be obtained from a fixed pruned graph have the same minimum distance.
       Hence, the minimum distance of pruned graphs (Definition~\ref{def:PMD}) is well-defined.  
   \end{remark}

\section{Main Results}
\label{sec:Main}

\subsection{Proving Theorem~\ref{thm:main}}
  
    We start with proving that the minimum distance of pruned graphs is well-defined (Proposition~\ref{prp:P2F}).
    Then, we refine a pruned graph by removing edges/check nodes, and adding variable nodes.
    As the result of this refinement process, the degree of every variable node becomes exactly two, 
    the number of check nodes becomes exactly $n_1$, and the number of variable nodes becomes exactly $n_2$.
    We prove that the minimum distance of the new pruned graph obtained from this process is not smaller than that of the original pruned graph. 
    The new pruned graph allows us to connect the LMD problem to an equivalent simply stated graph problem (Theorem~\ref{thm:main}).

     For a node $u$ in an undirected graph $G$, let $N_{G}(u)$ denote the set of nodes adjacent to $u$, 
     and $E_{G}(u)$ denote the set of edges incident to $u$.
     For a set of nodes $A$, we define 
     \[
       N_{G}(A) = \cup_{u\in A} N_G(u),
     \]
     and
     \[
       E_{G}(A) = \cup_{u\in A} E_G(u).
     \]

     \begin{lemma}
     \label{lem:Etps}
     Let $\mathcal{P}$ be a pruned graph, and $\mathcal{T}$ be a corresponding $(n,k,r)$-full Tanner graph 
     i.e., a full Tanner graph constructed from $\mathcal{P}$ using the P2F conversion.    
     Let $S$ be a subset of check nodes of $\mathcal{T}$. 
     Then, we have
     \[
       |\cov_{\mathcal{T}}(S)|=
       \begin{cases}
         n	\text{\quad if $S$ includes a global check node;}  \\
         |\cov_{\mathcal{P}}(S)| + ((r+1)|S|-|E_{\mathcal{P}}(S)|) 	 \text{ otherwise,}\\
       \end{cases}
     \]
     where $|S|$ denotes the cardinality of $S$.

     \end{lemma}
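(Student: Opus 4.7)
My plan is to handle the two cases of the identity separately by direct bookkeeping over the P2F conversion described in Section~\ref{sec:Pre}.

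For the first case, suppose $S$ contains a global check node $c$. By the definition of a $(n,k,r)$-full Tanner graph, $c$ has degree exactly $n$, so $c$ is adjacent to every variable node of $\mathcal{T}$. Hence $\cov_{\mathcal{T}}(S)\supseteq \cov_{\mathcal{T}}(\{c\})$ already exhausts the variable-node set, giving $|\cov_{\mathcal{T}}(S)|=n$.

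For the second case, assume every check node in $S$ is local, so each one belongs to $\mathcal{P}$ and has degree exactly $r+1$ in $\mathcal{T}$. I would split $\cov_{\mathcal{T}}(S)$ into the \emph{old} variable nodes (those already present in $\mathcal{P}$) and the \emph{new} variable nodes (those added in step~1 of P2F). For old variable nodes: P2F never introduces an edge between a local check node and an old variable node, since the only new edges it adds either are incident to a new variable node (step~2) or are incident to a global check node (step~3). So the old variable nodes adjacent to $S$ in $\mathcal{T}$ are precisely $\cov_{\mathcal{P}}(S)$. For new variable nodes: step~2 of P2F attaches each new variable node to exactly one local check node, so a new variable node lies in $\cov_{\mathcal{T}}(S)$ iff its unique edge lands in $S$. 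The number of such attachments into $S$ is the total degree of $S$ in $\mathcal{T}$ minus the number of edges incident to $S$ already present in $\mathcal{P}$, namely $(r+1)|S|-|E_{\mathcal{P}}(S)|$. Since old and new variable nodes are disjoint, summing the two counts yields the claimed formula.

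The only substantive check is that each new variable node really ends up with degree exactly one in step~2 of P2F; this follows from the edge-count invariant of Definition~\ref{def:PG}, which says $\mathcal{P}$ has $h(r+1)-(n-m)$ edges, leaving exactly $n-m$ residual degree slots among the $h$ local check nodes, precisely matching the $n-m$ new variable nodes, each contributing a single edge. I do not anticipate a real obstacle — the lemma is essentially an accounting identity — but I would be careful to emphasise that the conclusion does not depend on the particular order chosen for the iterative process in step~2 of P2F, which is manifest from the fact that only aggregate edge and degree quantities appear on the right-hand side.
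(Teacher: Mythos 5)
Your proof is correct and takes essentially the same route as the paper's. The paper counts $|\cov_{\mathcal{T}}(S)|-|\cov_{\mathcal{P}}(S)|$ via so-called \emph{singular} variable nodes (those incident to exactly one local check node, with that check node in $S$); since old variable nodes of $\mathcal{P}$ have degree at least two with all edges to local check nodes, the singular nodes are precisely your ``new'' variable nodes whose unique step-2 edge lands in $S$, so the two calculations coincide term by term.
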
     
     \begin{proof}
       If there is a global check node in $S$, then $|\cov_{\mathcal{T}}(S)|$ is equal to $n$, 
       because in a full Tanner graph each global check node is connected to all the $n$ variable nodes.
       Therefore, from now assume that all the check nodes in $S$ are local.
       We have 
       \begin{equation}
       \label{ETf}
         |E_{\mathcal{T}}(S)|=(r+1)|S|
       \end{equation}
        because the degree of each check node in $S$ is exactly $r+1$.
       Let us call a variable node $v$ \emph{singular} (with respect to $S$) if 
       \begin{enumerate}
         \item $v$ is adjacent to exactly one local check node in $\mathcal{T}$;
         \item the local check node that $v$ is incident to is in $S$.
       \end{enumerate} 
       By the definition of pruned graph, each variable node that is in $\cov_{\mathcal{T}}(S)$ but not in $\cov_{\mathcal{P}}(S)$ must be a singular variable node.
       It is because among edges incident to a local check node in $S$, exactly those that are incident to a singular variable node are removed in the F2P conversion.
       Therefore, $|\cov_{\mathcal{T}}(S)|-|\cov_{\mathcal{P}}(S)|$ is equal to the number of singular variable nodes in $\mathcal{T}$.
       The number of singular variable nodes, on the other hand, is equal to $|E_{\mathcal{T}}|-|E_{\mathcal{P}}|$, 
       because each singular variable node is incident to exactly one edge 
       (which is in $E_{\mathcal{T}}(S)$  but not in $E_{\mathcal{P}}(S)$).
       Thus,
       \[
         |\cov_{\mathcal{T}}(S)|-|\cov_{\mathcal{P}}(S)=|E_{\mathcal{T}}(S)|-|E_{\mathcal{P}}(S)|,
       \]
       hence
       \[
       \begin{split}
         |\cov_{\mathcal{T}}(S)|
           &=|\cov_{\mathcal{P}}(S)|+|E_{\mathcal{T}}(S)|-|E_{\mathcal{P}}(S)| \\
           &= |\cov_{\mathcal{P}}(S)| + ((r+1)|S|-|E_{\mathcal{P}}(S)|),
        \end{split}
       \]       
       where the second equality is by (\ref{ETf}).
       
     \end{proof}

     \begin{prop}
     \label{prp:P2F}
       All the full Tanner graphs that can be constructed from a fixed pruned graph using the P2F conversion have identical minimum distances.
     \end{prop}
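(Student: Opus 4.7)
The plan is to deduce the proposition as a direct corollary of Lemma~\ref{lem:Etps}. By Definition~\ref{def:MD}, the minimum distance of a $(n,k)$-Tanner graph $\mathcal{T}$ is determined entirely by the coverage cardinalities $|\cov_{\mathcal{T}}(S)|$ as $S$ ranges over subsets of its check nodes. So it suffices to show that, for each such $S$, the value $|\cov_{\mathcal{T}}(S)|$ is the same for every full Tanner graph $\mathcal{T}$ produced from a fixed pruned graph $\mathcal{P}$ via P2F.

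First, I would observe that the check-node set of $\mathcal{T}$, together with its partition into local and global check nodes, is determined entirely by $\mathcal{P}$ and is independent of the nondeterministic edge choices in step~2 of P2F: the $h$ check nodes of $\mathcal{P}$ serve as the local check nodes of $\mathcal{T}$, while step~3 adds exactly $(n-k)-h$ global check nodes. Therefore any subset $S$ of check nodes of $\mathcal{T}$ can be identified canonically across all P2F outputs, and the classification of $S$ as ``containing a global check node'' or ``consisting of local check nodes only'' is likewise an invariant.

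Next, applying Lemma~\ref{lem:Etps} to any $\mathcal{T}$ produced by P2F gives
\[
|\cov_{\mathcal{T}}(S)| = \begin{cases} n & \text{if $S$ contains a global check node,} \\ |\cov_{\mathcal{P}}(S)| + (r+1)|S| - |E_{\mathcal{P}}(S)| & \text{otherwise.} \end{cases}
\]
Every quantity appearing on the right-hand side is a function of $\mathcal{P}$ and $S$ alone, so $|\cov_{\mathcal{T}}(S)|$ is the same for every $\mathcal{T}$ obtained from $\mathcal{P}$. Consequently, the minimum distance, being a function only of these quantities, is also the same for all such $\mathcal{T}$.

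I expect essentially no technical obstacle, since Lemma~\ref{lem:Etps} has already done the main work. The only point worth checking carefully is that the hypotheses of Lemma~\ref{lem:Etps} hold uniformly for every P2F output rather than just one specific choice; this is immediate because the lemma's proof invokes only the defining properties of pruned graphs (fixed once $\mathcal{P}$ is fixed) and the fact that local check nodes of $\mathcal{T}$ have degree exactly $r+1$ (guaranteed by step~2 of P2F regardless of which assignment is made).
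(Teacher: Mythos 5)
Your proposal is correct and follows the paper's argument exactly: apply Lemma~\ref{lem:Etps} to see that $|\cov_{\mathcal{T}}(S)|$ depends only on $\mathcal{P}$ and $S$, then invoke Definition~\ref{def:MD}. The extra care you take in noting that the check-node set and its local/global partition is a P2F-invariant is a welcome but minor elaboration of what the paper leaves implicit.
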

     \begin{proof}
       Let $\mathcal{T}_1$ and $\mathcal{T}_2$ be two full Tanner graphs constructed from a $(n,k,r)$-pruned graph $\mathcal{P}$.
       By Lemma~\ref{lem:Etps}, we have $|\cov_{\mathcal{T}_1}(S)|=|\cov_{\mathcal{T}_2}(S)|$ for any subset of check nodes $S$.
       Thus, by Definition~\ref{def:MD}, the minimum distances of $\mathcal{T}_1$ and $\mathcal{T}_2$ are identical.
     \end{proof}

  \textbf{Refining Pruned Graphs} 
    Our objective here is to reduce the number of check nodes of a $(n,k,r)$-pruned graph to exactly $n_1$, and the degree of all
    variable nodes to exactly two.
    The challenge is to preserve the minimum distance of the pruned graph throughout the conversion. 
    We start with reducing the number of check nodes.
    
     \begin{lemma}
     \label{lem:CNR}
        Any $(n, k, r)$-pruned graph $\mathcal{P}_1$ with minimum distance $d$, and $h_1>n_1$ check nodes
       can be converted into a  $(n, k, r)$-pruned graph $\mathcal{P}_2$ with minimum distance at least $d$ and $h_2=h_1-1$ check nodes.
    \end{lemma}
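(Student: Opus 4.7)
The plan is to construct $\mathcal{P}_2$ from $\mathcal{P}_1$ purely by a sequence of removals: deleting one check node, cascade-removing any variable whose degree would consequently drop below two, and then removing additional variables as needed to restore the edge-count identity in property~5. Restricting to removals is advantageous because it automatically preserves the minimum-distance bound, as explained next.

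Using Lemma~\ref{lem:Etps}, the condition that a pruned graph $\mathcal{P}$ has minimum distance at least $d$ can be recast as: for every $\eta \in [n-k-d+2,\,h]$ and every subset $S$ of $\eta$ check nodes, $|E_{\mathcal{P}}(S)| - |N_{\mathcal{P}}(S)| \leq r\eta - k$. (For $\eta$ larger than the number of check nodes in $\mathcal{P}$, the condition is automatically satisfied via the global check nodes of the associated full Tanner graph.) The key observation is that $|E(S)| - |N(S)|$ is non-increasing under each elementary removal: deleting an edge that lies in $E(S)$ decreases $|E(S)|$ by one and $|N(S)|$ by at most one, and deleting a variable $v$ adjacent to $k_v \geq 1$ members of $S$ decreases $|E(S)|$ by $k_v$ while decreasing $|N(S)|$ by exactly one. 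Consequently, every pruned graph obtained from $\mathcal{P}_1$ by removals has minimum distance at least $d$.

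For the construction itself, pick a check node $c \in \mathcal{P}_1$ of degree $d_c$ and delete it; this forces removal of the $x$ neighbors of $c$ whose $\mathcal{P}_1$-degree is exactly two (they drop to degree one). A short edge-count computation shows the resulting graph has exactly $(r+1) - d_c$ more edges than property~5 requires. To fix this, remove an additional $y$ variables whose total degree in the post-cascade graph equals $y + (r+1-d_c)$, where $0 \leq y \leq r+1 - d_c$; this brings the edge count into agreement with property~5 while keeping all remaining variable degrees at least two. When $d_c = r+1$ take $y = 0$; when $d_c < r+1$ one may, for instance, take $y = r+1-d_c$ variables of degree exactly two, or a single variable of degree $r+2-d_c$.

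The main obstacle is showing that this additional removal is always feasible. If $\mathcal{P}_1$ contains a check node of degree $r+1$, choose $c$ to be such a node and the issue disappears. Otherwise, every check node has degree at most $r$, and a counting argument leveraging $h_1 > n_1$ is required. In the extremal sub-case where every variable of $\mathcal{P}_1$ has degree exactly two, property~5 forces $m = h_1(r+1) - n$, and combining with $h_1 \geq n_1 + 1 \geq n/(r+1) + 1$ yields $m \geq r+1$, which supplies enough degree-two variables to remove. Mixed-degree configurations can be handled by a parallel case analysis that either locates a single high-degree variable or an appropriate combination. I expect these counting verifications to be the technical heart of the argument.
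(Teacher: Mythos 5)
Your monotonicity observation — that $|E_{\mathcal{P}}(S)| - |N_{\mathcal{P}}(S)|$ can only decrease under any elementary removal of a check node, edge, or variable, and that this together with Lemma~\ref{lem:Etps} immediately yields preservation of minimum distance for any removal-only construction — is correct and is in fact a cleaner packaging of the inequality the paper verifies directly. That part of the argument is fine.

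The gap is in the construction itself, and you have partly identified it yourself. You insist on restoring the edge-count identity of Definition~\ref{def:PG} by removing \emph{whole variables}, which leads to the constraint that the chosen set $T$ of additionally-removed variables must satisfy $\sum_{v \in T}(\deg(v) - 1) = r+1-d_c$. Since each summand is at least one but can be arbitrarily large, this is a subset-sum-type condition that need not be solvable for a fixed $c$; for instance, if the surviving variables all have degree $4$ and $r+1-d_c = 2$, no $T$ works, and it is not obvious that one can always dodge the problem by a better choice of $c$. Your ``extremal sub-case'' only covers the all-degree-two situation, and the claim that ``mixed-degree configurations can be handled by a parallel case analysis'' is left unsubstantiated — you candidly flag this as unfinished. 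The paper sidesteps the issue entirely by using a finer granularity of removal: after deleting a check node of degree $l$, it removes $r+1-l$ \emph{edges} one at a time, each time from an arbitrary variable of current degree at least two, and only afterward deletes any variables that ended up at degree one. Removing edges one at a time means each step adjusts the excess by exactly one, so there is no combinatorial matching to solve; and feasibility reduces to the single inequality that the edge count after Step~1, namely $h_1(r+1)-(n-m_1)-l \ge (r+1-l)+m_1$ (which follows from $h_1 \ge n_1+1$), guarantees that a degree-$\ge 2$ variable is always available. Since your own monotonicity argument already covers edge deletions, you can close the gap simply by switching Step~3 of your construction from variable removal to edge removal followed by a cleanup of degree-one variables, and then importing the paper's one-line edge count.
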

     \begin{proof}
       Let $m_1$ be the number of variable nodes in $\mathcal{P}_1$.
       We convert $\mathcal{P}_1$ into $\mathcal{P}_2$ through the following process. 
       
      \textbf{Check node reduction process:} 
      \begin{enumerate}
         \item[] \textbf{Step 1:} An arbitrary check node is selected and is removed from $\mathcal{P}_1$. 
           Let $l$ be the degree of the removed check node.
         \item[] \textbf{Step 2:} An arbitrary variable node with degree at least two is selected and one of its edges is removed. 
           This operation is done $r+1-l$ times\footnote{A variable node may be selected multiple times. Also, note that \mbox{$r+1-l\geq 0$} because $l\leq r+1$.}.
           This is possible because the total number of edges of  $\mathcal{P}_1$ after Step 1 is 
           \[
           \begin{split}
              &h_1(r+1)-(n-m_1)-l\\
             &\geq (n_1+1)(r+1)-(n-m_1)-l\\
             &=n_2+(r+1-l)+m_1\\
             &\geq (r+1-l)+m_1.
           \end{split}
           \]
     
         \item[] \textbf{Step 3:} All variable nodes of degree one are removed.
       \end{enumerate}         
       Suppose the number of remaining variable nodes is $m_2$. 
       The total number of edges removed is then
        \[
          l+(r+1-l)+(m_1-m_2) = r+1+(m_1-m_2).
        \]
        Thus the total number of remaining edges is
        \[
        \begin{split}
          &\left(h_1(r+1)-(n-m_1)\right)- ((r+1)+(m_1-m_2))\\
          &=(h_1-1)(r+1)-(n-m_2)\\
          &=h_2(r+1)-(n-m_2),
        \end{split}
        \]
        which is equal to the number of edges of a $(n,k,r)$-pruned graph with $h_2=h_1-1$ check nodes, and $m_2$ variable nodes.
        Note that the degree of each variable node in the constructed pruned graph $\mathcal{P}_2$ is at least two, and the degree of each check node is at most $r+1$.
        Therefore, the constructed graph $\mathcal{P}_2$ is indeed a $(n,k,r)$-pruned graph.
          
       Now, let us compare the minimum distances of the two pruned graphs $\mathcal{P}_1$, and $\mathcal{P}_2$.
       Let $\mathcal{T}_1$, $\mathcal{T}_2$ be two $(n,k,r)$-full Tanner graphs corresponding to $\mathcal{P}_1$, and $\mathcal{P}_2$, respectively. 
       Next, we show that 
       \[
         |\cov_{\mathcal{T}_2}(S)|\geq |\cov_{\mathcal{T}_1}(S)|.
       \]
       for every set $S$ of check nodes in the full Tanner graph.
       By Definition~\ref{def:MD}, this implies that the minimum distance of $\mathcal{T}_2$ is not smaller than that of $\mathcal{T}_1$.
       
       Let $S$ be an arbitrary set of check nodes of $\mathcal{T}_2$.
       If $S$ includes any global check node of $\mathcal{T}_2$, then $|\cov_{\mathcal{T}_2}(S)|=n$ which yields the above inequality, 
       because $|\cov_{\mathcal{T}_1}(S)|$ is at most equal to $n$.
       Thus, assume that $S$ is a subset of local check nodes of $\mathcal{T}_2$ (i.e., $S$ is a subset of check nodes of $\mathcal{P}_2$).
       We have
       \[
         |E_{\mathcal{P}_1}(S)|-|E_{\mathcal{P}_2}(S)| \geq |\cov_{\mathcal{P}_1}(S)|-|\cov_{\mathcal{P}_2}(S)|,
       \]
       because the reduction in size of $\cov_{\mathcal{P}_1}(S)$ as the result of edge removal in the \emph{check node reduction process} is at most equal to the
       number of edges removed from $E_{\mathcal{P}_1}(S)$.
       Equivalently,
       \[
         |\cov_{\mathcal{P}_2}(S)|-|E_{\mathcal{P}_2}(S)| \geq |\cov_{\mathcal{P}_1}(S)|  - |E_{\mathcal{P}_1}(S)|.
       \]
       Hence, by Lemma~\ref{lem:Etps}, we get
       \[
       \begin{split}
         |\cov_{\mathcal{T}_2}(S)|
         &=|\cov_{\mathcal{P}_2}(S)|+(|S|(r+1)-|E_{\mathcal{P}_2}(S)|)\\
         &=(|\cov_{\mathcal{P}_2}(S)|-|E_{\mathcal{P}_2}(S)|)+|S|(r+1)\\
         &\geq (|\cov_{\mathcal{P}_1}(S)|-|E_{\mathcal{P}_1}(S)|)+|S|(r+1)\\
         &=|\cov_{\mathcal{P}_1}(S)|+(|S|(r+1)-|E_{\mathcal{P}_1}(S)|)\\
         &=|\cov_{\mathcal{T}_1}(S)|.
       \end{split}
       \]
     \end{proof}

     \noindent    
     Next, we reduce the degree of all variable nodes to two, while keeping the number of check nodes at $n_1$.
     \begin{prop}
     \label{prp:refined}
        Any $(n,k,r)$-pruned graph with minimum distance $d$ can be converted to a 
        $(n,k,r)$-pruned graph with minimum distance at least $d$ in which the degree of every variable node is exactly two, the number of check nodes is exactly $n_1$,
        and the number of variable nodes is $n_2$.
      \end{prop}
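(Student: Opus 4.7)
The plan is to perform the conversion in two phases. In the first phase, I would iteratively apply Lemma~\ref{lem:CNR} to the starting pruned graph $\mathcal{P}_0$ of minimum distance $d$ and $h \geq n_1$ check nodes, decreasing the number of check nodes by one each step without decreasing the minimum distance, until I arrive at a pruned graph $\mathcal{P}_1$ with exactly $n_1$ check nodes. By the edge-count identity of Definition~\ref{def:PG}, $\mathcal{P}_1$ has $|E|=n_2+m$ where $m$ is the number of variable nodes; combining this with the degree-at-least-two property forces $m\leq n_2$, with equality exactly when every variable has degree two. So the goal of the second phase is to increase $m$ to exactly $n_2$ while flattening all variable degrees to $2$.

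In the second phase I would iterate the following local surgery as long as $m<n_2$. Pick any variable $v$ of degree $t\geq 3$, which must exist since $\sum_v \deg(v)=n_2+m>2m$; pick a neighbor $c_j \in N(v)$ of degree at most $r$; pick any other neighbor $c_a \in N(v)\setminus\{c_j\}$; delete the edge $(v,c_a)$; and introduce a new variable $v'$ with edges $(v',c_a)$ and $(v',c_j)$. The existence of the low-degree $c_j$ follows by a short pigeonhole: if every neighbor of $v$ had degree $r+1$, then restricted to $N(v)$ alone the check-degree sum would already be $t(r+1)\geq 3(r+1) > 2r \geq 2n_2 \geq n_2+m$, contradicting that $n_2+m$ is the total check-degree sum (here I use $n_2 = n_1(r+1)-n < r+1$, hence $n_2\leq r$). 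It is routine to verify that the result is again a $(n,k,r)$-pruned graph: only $c_j$'s degree rises, by one, to at most $r+1$; $v$'s degree drops to $t-1\geq 2$; $v'$ has degree two; and the edge-count identity is preserved since both $e$ and $m$ grow by one.

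The crucial point—and what I expect to be the main obstacle—is showing that this surgery does not reduce the minimum distance. By Lemma~\ref{lem:Etps} it suffices to prove that $|N_\mathcal{P}(S)|-|E_\mathcal{P}(S)|$ does not decrease for any subset $S$ of check nodes. One performs a case analysis on the membership of $c_a$ and $c_j$ in $S$. The only delicate case is $c_a,c_j \in S$, where two new edges enter $E(S)$; the rescuing observation is that, because $c_j$ was chosen inside $N_\mathcal{P}(v)$, the variable $v$ already has two distinct neighbors $c_a,c_j$ in $S$ before the surgery, so after losing only the edge to $c_a$ it still has the edge to $c_j\in S$ and remains in $N(S)$, while $v'$ newly joins $N(S)$. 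This yields $\Delta |N(S)|=1=\Delta|E(S)|$. The other three cases ($c_a\notin S$ or $c_j\notin S$) are straightforward and give $\Delta(|N(S)|-|E(S)|)\geq 0$ directly.

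Since each surgery strictly increases $m$ by one and $m$ is bounded above by $n_2$, the process terminates after $n_2-m_1$ steps with a pruned graph that has $n_1$ check nodes, $n_2$ variable nodes all of degree exactly two, and minimum distance at least $d$, as required.
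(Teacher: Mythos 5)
Your proposal is correct and takes essentially the same route as the paper: first shrink to $n_1$ check nodes via repeated application of Lemma~\ref{lem:CNR}, then repeatedly split a high-degree variable $v$ by removing an edge $(v,c_a)$ and attaching a fresh degree-two variable $v'$ to $c_a$ and a second neighbor, verifying via Lemma~\ref{lem:Etps} and a short case analysis on $S\cap\{c_a,c_j\}$ that $|\cov_{\mathcal{P}}(S)|-|E_{\mathcal{P}}(S)|$ never decreases. The only difference is cosmetic: your pigeonhole argument for locating a neighbor of degree at most $r$ is unnecessary, since once $h=n_1$ one has $m\leq n_2\leq r$, so \emph{every} check node has degree at most $m\leq r<r+1$ and any second neighbor of $v$ can serve as $c_j$, which is the observation the paper makes directly.
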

      \begin{proof}
        By repeatedly applying Lemma~\ref{lem:CNR}, we first convert the given $(n,k,r)$-pruned graph into one with $n_1$ check nodes.
        Let us represent the new pruned graph by $\mathcal{P}_1$.
        By the definition of pruned graphs, the number of edges of $\mathcal{P}_1$ is 
        \[
          n_1(r+1)-(n-m_1)=n_2+m_1,
        \] 
        where $m_1$ is the number of its variable nodes.
        Since the degree of each variable node is at least two, we get that the number of edges of $\mathcal{P}_1$ is at least $2m_1$, thus
        \[
          2m_1\leq n_2+m_1,
        \]
        hence $m_1\leq n_2$.
        Therefore, $m_1\leq r$ and $m_1<n$, because $n_2\leq r$, and $n_2<n$, respectively. 
        Since the total number of variable nodes, $m_1$, is at most equal to $r$, 
        we get that the degree of each check node in $\mathcal{P}_1$ is \emph{strictly} less than $r+1$.        
        
        Let $v$ be a variable node which has the maximum degree among all variable nodes in $\mathcal{P}_1$.
        If the degree of $v$ is two, we are done, because this implies that the degree of all variable nodes in $\mathcal{P}_1$ is two.
        Therefore, assume that  the degree of $v$ is more than two.
        Let $c_1$ and $c_2$ be two check nodes adjacent to $v$.
        From  $\mathcal{P}_1$, we construct  $\mathcal{P}_2$ as follows:
        First, we add a variable node $v'$ (of degree zero) to $\mathcal{P}_1$.
        Note that, after this addition, the number of variable nodes does not exceed $n$ because  $m_1<n$.
        We connect the variable node $v'$ to both check nodes $c_1$, and $c_2$, and remove the edge between $v$ and $c_2$.
        This edge removal reduces the degree of the variable node $v$ by one. 
        The degree of $v$, however, remains at least two, as $v$'s degree, before removal, was more than two.
        Also, the degrees of $c_1$ and $c_2$ will not exceed $r+1$, because the degree of each node was strictly less than $r+1$.
        By the definition of pruned graphs, the constructed graph $\mathcal{P}_2$ is a $(n,k,r)$-pruned graph 
        with $m_2=m_1+1$ variable nodes, $n_1$ check nodes, and
        \[
        \begin{split}
           n_1(r+1)-(n-m_1)+2-1
           &=n_1(r+1)-(n-(m_1+1))\\
           &=n_1(r+1)-(n-m_2)
         \end{split}
        \]
        edges.
        Next, we show that the minimum distance of $\mathcal{P}_2$ is not less than that of $\mathcal{P}_1$.
        This will conclude the proof, as by using the above process, the maximum degree can be always decremented if it is more than two;
        repeating this will yield a $(n,k,r)$-pruned graph in which all variable nodes have degree two.
        
        Let $\mathcal{T}_1$ and $\mathcal{T}_2$ be two $(n,k,r)$-full Tanner graphs corresponding to $\mathcal{P}_1$ and $\mathcal{P}_2$, respectively.
        To prove the above claim, by Definition~\ref{def:MD}, it is sufficient to show that for every set $S$ of check nodes we have
       \begin{equation}
       \label{equ:F2vsF1}
         |\cov_{\mathcal{T}_2}(S)|\geq |\cov_{\mathcal{T}_1}(S)|.
       \end{equation}
       If $S$ includes any global check node of $\mathcal{T}_2$, then $|\cov_{\mathcal{T}_2}(S)|=n$, hence the inequality.
       Therefore, assume that $S$ is a subset of local check nodes of $\mathcal{T}_2$.
       If $S$ does not contain any of the check nodes $c_1$ and $c_2$, we will have  
       $|\cov_{\mathcal{T}_2}(S)|= |\cov_{\mathcal{T}_1}(S)|$.
       This is by Lemma~\ref{lem:Etps} and the fact that, except check nodes $c_1$ and $c_2$, every other check node of $\mathcal{P}_2$  is identical to 
       its original one in $\mathcal{P}_1$ .
       Using Lemma~\ref{lem:Etps}, the inequality (\ref{equ:F2vsF1}) can be verified for the remaining cases where $S$ includes one or both check nodes $c_1$ and $c_2$:
       If $S$ contains $c_1$ but not $c_2$ or if it contains both $c_1$ and $c_2$, then we have $|E_{\mathcal{P}_2}(S)|=|E_{\mathcal{P}_1}(S)|+1$ and
       $|\cov_{\mathcal{P}_2}(S)|=|\cov_{\mathcal{P}_1}(S)|+1$ hence by Lemma~\ref{lem:Etps}, we get $|\cov_{\mathcal{T}_2}(S)|= |\cov_{\mathcal{T}_1}(S)|$.
       If $S$ includes $c_2$ but not $c_1$, then we have two cases based on whether or not $v$ is in $\cov_{\mathcal{P}_1}(S\backslash \{c_2\})$.
       If $v\notin \cov_{\mathcal{P}_1}(S\backslash \{c_2\})$, 
       then  $|\cov_{\mathcal{P}_2}(S)|=|\cov_{\mathcal{P}_1}(S)|$, and $|E_{\mathcal{P}_2}(S)|=|E_{\mathcal{P}_1}(S)|$,
       hence $|\cov_{\mathcal{T}_2}(S)|= |\cov_{\mathcal{T}_1}(S)|$.
       If $v\in \cov_{\mathcal{P}_1}(S\backslash \{c_2\})$, however, we will have $|\cov_{\mathcal{P}_2}|=|\cov_{\mathcal{P}_1}|+1$
       and $|E_{\mathcal{P}_2}(S)|=|E_{\mathcal{P}_1}(S)|$, thus $|\cov_{\mathcal{T}_2}(S)|= |\cov_{\mathcal{T}_1}(S)|+1$, hence the inequality~(\ref{equ:F2vsF1}).
       
       Let $\mathcal{P}$ be the constructed pruned graph.
       The number of edges of $\mathcal{P}$ is equal to $2m$, where $m$ denotes the number of variable nodes of $\mathcal{P}$.
       This is because the degree of each variable node is exactly two.
       Alternatively, by the definition of pruned graphs, the number of edges of $\mathcal{P}$ is
       \[
         n_1(r+1)-(n-m).
       \]
       Thus, we must have
       \[
         n_1(r+1)-(n-m)=2m,
       \]
       hence 
       \[
         m=n_1(r+1)-n=n_2.
       \]

      \end{proof}

      \noindent
      We are ready now to prove Theorem~\ref{thm:main}.


      \begin{proof}~[{\bfseries Theorem~\ref{thm:main}}]
        So far, we have the following:
        \begin{enumerate}
          \item There is $(n,k,r)$-LRC with minimum distance  $d^*$, iff there is a $(n,k,r)$-full Tanner graph with minimum distance $d^*$ (Proposition~\ref{prp:P2F}).
          \item There is a $(n,k,r)$-full Tanner graph with minimum distance $d^*$ iff there is a $(n,k,r)$-pruned graph with minimum distance $d^*$
            (Definition~\ref{def:PMD} and Proposition~\ref{prp:P2F}).
          \item There is a $(n,k,r)$-pruned graph with minimum distance $d^*$ iff there is a $(n,k,r)$-pruned graph $\mathcal{P}$  
            with minimum distance $d^*$ in which the degree of every variable node is exactly two, the number of check nodes is $n_1$,
            and the number of variable nodes in $n_2$ (Proposition~\ref{prp:refined}).
        \end{enumerate}
          Suppose $\mathscr{D}(n, k, r)=d^*$.
          Therefore, there exists a $(n,k,r)$-pruned graph $\mathcal{P}$ with minimum distance $d^*$ in which the degree of every variable node is  two, 
          the number of check nodes is $n_1$, and the number of variable nodes in $n_2$.
          Let $G=(V,E)$ be a multi-graph, where the vertex set $V$ is the set of check nodes of  $\mathcal{P}$, and $(u,v)\in E$ iff there is variable node in $\mathcal{P}$ 
          that is connected to both check nodes $u$ and $v$.
          Since the degree of each variable node in $\mathcal{P}$ is exactly two, the size of $G$ will be equal to the number of variable nodes in $\mathcal{P}$,
          i.e. $|E|=n_2$. Also, $|V|=n_1$, because $V$ is the set of check nodes of  $\mathcal{P}$.
          For every subset $S$ of check nodes of $\mathcal{P}$, we have 
          \[
             |\cov_{\mathcal{P}}(S)| =  |E_{\mathcal{P}}(S)| - |G[S]|,
          \]
          where $|G[S]|$ denotes the size of the subgraph induced in $G$ by $S$.
          Therefore, by Lemma~\ref{lem:Etps}, we get
          \begin{equation}
          \label{equ:main1}
          \begin{split}
             |\cov_{\mathcal{T}}(S)|
             &= |\cov_{\mathcal{P}}(S)| + ((r+1)|S|-|E_{\mathcal{P}}(S)|)\\
             &=  (r+1)|S| -  |G[S]|,
          \end{split}
          \end{equation}
          where $\mathcal{T}$ is a $(n,k,r)$-full Tanner graph obtained from $\mathcal{P}$ using the F2P conversion method.
          Since the minimum distance of $\mathcal{T}$ is $d^*$, by Definition~\ref{def:MD}, 
          for every set $S$ of $n-k-d^*+2=\lceil \frac{k}{r}\rceil=k_1$ local check nodes of $\mathcal{T}$, we must have
          \begin{equation}
          \label{equ:main2}
            |\cov_{\mathcal{T}}(S)| \geq (k+|S|=k+k_1).
          \end{equation}
          By~(\ref{equ:main1}), the above inequality is equivalent to
          \[
          \begin{split}
            |G[S]| 
            &\leq (r+1)|S| - k_1 -k\\
            &=(r+1)k_1 - k_1 -k\\
            &=rk_1 -k\\
            &=k_2.
          \end{split}
          \]
          
          Note that  by~(\ref{equ:main1}), $|\cov_{\mathcal{T}}(S)|$ increases with the size of the set $S$.
          It is because the degree of each node in $G$ (hence in $G[S]$) is strictly less than $r+1$, 
          since the size of $G$ (which is equal to $n_2)$ is strictly less than $r+1$.
          Therefore, if (\ref{equ:main2}) hods for every set $S$ of size $k_1$, we will have
          \[
            |\cov_{\mathcal{T}}(S)| \geq k+|S|
          \]
          for every set $S$ of size at least $k_1$.
          Therefore, a necessary and sufficient condition for $\mathcal{T}$ to have a minimum distance of $d^*$ is that 
          $|G[S]|\leq k_2$, for every set $S$, $|S|=k_1$.
          
          Conversely, if such a multigraph $G$ exists, then we can construct a  pruned graph, and then a full Tanner graph of minimum distance $d^*$.
          The full Tanner graph, determines the zero elements of the optimal code's parity check matrix $\mathbf{H}$.
          If the non-zero elements of $\mathbf{H}$ are selected uniformly at random from a finite field of order $n^{d^*}$, 
          we get that the minimum distance of the the corresponding code is 
          $d^*$ with high probability (i.e, with probability at least $1-\frac{1}{n}$).\footnote{In general, this probability can be set to at least 
          $(1-\epsilon)$ by setting the order of the finite field to be at least $\frac{n^{d^*-1}}{\epsilon}$.} 
          Similar to the proof or Proposition~\ref{prp:nkrLRC}, this can be easily derived from the  Schwartz-Zippel theorem and the union bound.
          
      \end{proof}

\subsection{LMD and Extremal Graph Theory}

    For a family of so called prohibited graphs $\mathcal{F}$,
    let $ex(n, \mathcal{F})$ denote the maximum number of edges that an $n$-vertex graph can have 
    without containing a subgraph from $\mathcal{F}$.
    We use the notation $eX(n,\mathcal{F})$ when multiple edges are permitted.
    
    Let $\mathscr{F}_{k_1, k_2}$ denote the family of all multigraphs of order $k_1$ and size strictly greater than $k_2$.
    The following corollary is a direct result of Theorem~\ref{thm:main}.
    \begin{cor}
      \label{cor:LMDeX}
      $\mathscr{D}(n, k, r)=d^*$ iff $n_2\leq eX(n_1, \mathscr{F}_{k_1, k_2})$.
    \end{cor}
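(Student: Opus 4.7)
The plan is to derive Corollary~\ref{cor:LMDeX} as an almost immediate consequence of Theorem~\ref{thm:main}, with the only content beyond unpacking the definitions being the observation that the property ``contains no subgraph from $\mathscr{F}_{k_1,k_2}$'' is monotone under edge deletion.

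First I would restate Theorem~\ref{thm:main} in the language of forbidden subgraphs: a $k_1$-vertex subgraph of size greater than $k_2$ is exactly a member of $\mathscr{F}_{k_1, k_2}$, so Theorem~\ref{thm:main} says $\mathscr{D}(n,k,r)=d^*$ iff there exists an $n_1$-vertex multigraph of size exactly $n_2$ that is $\mathscr{F}_{k_1,k_2}$-free. Matching this against the definition of $eX(n_1,\mathscr{F}_{k_1,k_2})$ gives the two directions.

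For the forward direction, if some such $n_1$-vertex, size-$n_2$ multigraph exists and is $\mathscr{F}_{k_1,k_2}$-free, then by the definition of $eX$ the number of edges $n_2$ cannot exceed the maximum, so $n_2\leq eX(n_1,\mathscr{F}_{k_1,k_2})$. For the backward direction, take an extremal $n_1$-vertex multigraph $H^\star$ with $eX(n_1,\mathscr{F}_{k_1,k_2})$ edges that contains no member of $\mathscr{F}_{k_1,k_2}$ as a subgraph; since $n_2\leq |E(H^\star)|$, I would delete $|E(H^\star)|-n_2$ edges arbitrarily from $H^\star$ to produce an $n_1$-vertex multigraph of size exactly $n_2$. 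The key (and only nontrivial) point is that deleting edges cannot create a new subgraph from $\mathscr{F}_{k_1,k_2}$: any $k_1$-vertex subgraph of the reduced multigraph is a subgraph of the corresponding $k_1$-vertex subgraph of $H^\star$, hence has at most $k_2$ edges. Therefore the reduced multigraph witnesses the condition of Theorem~\ref{thm:main}, giving $\mathscr{D}(n,k,r)=d^*$.

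There is no real obstacle here; the statement is essentially a translation of Theorem~\ref{thm:main} into the extremal graph theory notation $eX(\cdot,\cdot)$. The only subtlety to flag in a careful write-up is the monotonicity step, together with the fact that $n_1$-vertex multigraphs of \emph{every} integer size from $0$ up to $eX(n_1,\mathscr{F}_{k_1,k_2})$ are realizable as $\mathscr{F}_{k_1,k_2}$-free multigraphs (which follows from the edge-deletion argument applied to the extremal graph). Once this is noted, the iff is immediate.
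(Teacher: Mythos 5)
Your proof is correct and matches the paper's intent: the paper states the corollary as a ``direct result'' of Theorem~\ref{thm:main} without spelling out the details, and your argument simply makes explicit the translation between ``there exists an $n_1$-vertex, size-$n_2$ multigraph with no $k_1$-vertex subgraph of size greater than $k_2$'' and ``$n_2 \leq eX(n_1, \mathscr{F}_{k_1,k_2})$.'' The monotonicity observation you flag (deleting edges from an extremal $\mathscr{F}_{k_1,k_2}$-free multigraph cannot create a forbidden subgraph) is indeed the one piece of content needed for the backward direction, and it is exactly what the paper is implicitly relying on.
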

    We have 
    \[
      eX(n_1, \mathscr{F}_{k_1, k_2}) \geq ex(n_1, \mathscr{F}_{k_1, k_2}),
    \] 
    because simple graphs are subset of multigraphs.
    Thus, we also get the following corollary from Theorem~\ref{thm:main}.
    \begin{cor}
      \label{cor:LMDex}
      $\mathscr{D}(n, k, r)=d^*$ if $n_2\leq ex(n_1,\mathscr{F}_{k_1, k_2})$.
    \end{cor}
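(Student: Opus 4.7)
The plan is to derive this as an immediate consequence of Corollary~\ref{cor:LMDeX} together with the trivial comparison between $ex$ and $eX$. First I would observe that every simple graph is a (multi)graph with edge multiplicities all equal to one, so the class of simple $n_1$-vertex graphs avoiding the family $\mathscr{F}_{k_1,k_2}$ is contained in the class of multigraphs on $n_1$ vertices avoiding $\mathscr{F}_{k_1,k_2}$. Taking maxima over these two nested classes gives
\[
ex(n_1,\mathscr{F}_{k_1,k_2}) \;\leq\; eX(n_1,\mathscr{F}_{k_1,k_2}).
\]

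Chaining with the hypothesis $n_2 \leq ex(n_1,\mathscr{F}_{k_1,k_2})$ yields $n_2 \leq eX(n_1,\mathscr{F}_{k_1,k_2})$, and Corollary~\ref{cor:LMDeX} then delivers $\mathscr{D}(n,k,r)=d^*$. There is no real obstacle here; the only subtlety worth flagging in writing is that the inclusion of simple graphs into multigraphs respects the prohibition, which holds because $\mathscr{F}_{k_1,k_2}$ is defined purely by order and size, independent of whether parallel edges are present, so avoiding a subgraph from $\mathscr{F}_{k_1,k_2}$ means the same thing for both graph classes on $n_1$ vertices.
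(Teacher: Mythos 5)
Your proof is correct and matches the paper's reasoning exactly: the paper also derives Corollary~\ref{cor:LMDex} from Corollary~\ref{cor:LMDeX} via the inequality $ex(n_1,\mathscr{F}_{k_1,k_2}) \leq eX(n_1,\mathscr{F}_{k_1,k_2})$, justified by the observation that simple graphs form a subclass of multigraphs. Your extra remark that the prohibition is defined purely by order and size is a fine clarification but does not change the argument.
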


    Corollaries~\ref{cor:LMDeX} and~\ref{cor:LMDex} allow us to approach the LMD problem using 
    existing results in extremal graph theory.    
    For example, when $k_1=3$ and $k_2=2$, we get that $\mathscr{D}(n, k, r)=d^*$ iff $n_2\leq \left\lfloor \frac{n_1^2}{4}\right\rfloor$.
    This can be proven using the Mantel's theorem on triangle-free maximal graphs~\cite{Bollobas04}.

    

    \begin{theorem}
    \label{thm:Mantel}
      Suppose $k_1=3$ and $k_2=2$.
      Then, $\mathscr{D}(n, k, r)=d^*$ iff $n_2\leq \left\lfloor \frac{n_1^2}{4}\right\rfloor$.
    \end{theorem}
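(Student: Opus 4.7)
The plan is to apply Corollary~\ref{cor:LMDeX}, reducing the statement to the purely combinatorial identity
\[
  eX(n_1,\mathscr{F}_{3,2}) \;=\; \left\lfloor \tfrac{n_1^{2}}{4}\right\rfloor,
\]
i.e.\ the maximum number of edges in an $n_1$-vertex multigraph in which no three vertices span more than two edges is exactly $\lfloor n_1^{2}/4\rfloor$. The lower bound is immediate from Mantel's theorem: the complete bipartite graph $K_{\lceil n_1/2\rceil,\lfloor n_1/2\rfloor}$ is simple, triangle-free, and has $\lfloor n_1^{2}/4\rfloor$ edges, so any $n_2 \le \lfloor n_1^{2}/4\rfloor$ can be realized by taking an $n_2$-edge subgraph (adding isolated vertices if needed).

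For the upper bound, the first step is a short structural analysis of multigraphs $G$ on $n_1$ vertices with no $3$-vertex subgraph of size greater than $2$. I would establish three properties in order: (i) no pair of vertices has edge-multiplicity $\ge 3$, since three parallel edges together with any third vertex already give a $3$-vertex subgraph of size $\ge 3$; (ii) if a pair $\{u,v\}$ carries a double edge, then neither $u$ nor $v$ is incident to any other edge, for otherwise that extra edge together with the double edge produces three edges on three vertices; (iii) the simple part of $G$ (edges of multiplicity one) contains no triangle.

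Combining these, every extremal $G$ decomposes as $t$ vertex-disjoint ``isolated'' double edges on $2t$ vertices together with a simple triangle-free graph on the remaining $n_1 - 2t$ vertices. The total number of edges is therefore at most
\[
  f(t) \;=\; 2t + \left\lfloor \tfrac{(n_1-2t)^{2}}{4}\right\rfloor,
\]
where the simple-part bound is Mantel's theorem. A direct calculation (split into the cases $n_1$ even and $n_1$ odd) reduces $f(t) - f(0)$ to $t(t+2-n_1)$, which is nonpositive for all $t \ge 1$ whenever $n_1 \ge 3$. Hence $f$ is maximized at $t=0$, giving $eX(n_1,\mathscr{F}_{3,2}) \le \lfloor n_1^{2}/4\rfloor$.

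The only step that requires genuine attention is (ii): one must be careful that a double edge really forbids \emph{all} other incidences at its two endpoints, not merely incidences to a common third vertex. Once (i)--(iii) are established the rest is bookkeeping, and the small-$n_1$ degenerate cases (where $n_1 < k_1 = 3$) are absorbed into the computation since $\lfloor n_1^{2}/4\rfloor$ then bounds $n_2$ automatically from the identity $n_2 = n_1(r+1) - n$ together with $n \ge k$.
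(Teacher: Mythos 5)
Your proposal is correct and takes essentially the same route as the paper: both reduce via Corollary~\ref{cor:LMDeX} to showing $eX(n_1,\mathscr{F}_{3,2})=\left\lfloor n_1^2/4\right\rfloor$, use Mantel's theorem for the simple (lower-bound) part, and hinge the multigraph upper bound on the same key observation that a double edge forces its two endpoints to be isolated from the rest of the graph. The only difference is packaging: the paper peels off one double-edge pair at a time via induction on $n_1$ (getting $2+\left\lfloor (n_1-2)^2/4\right\rfloor<\left\lfloor n_1^2/4\right\rfloor$ for $n_1\ge 5$), whereas you perform the decomposition into $t$ isolated double edges plus a triangle-free simple remainder all at once and then maximize $f(t)=2t+\left\lfloor (n_1-2t)^2/4\right\rfloor$ over $t$ directly; the two are computationally equivalent, and your explicit verification that $f(t)-f(0)=t(t+2-n_1)\le 0$ is sound.
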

    \begin{proof}
      A simple graph is $\mathscr{F}_{3,2}$-free iff it is triangle-free.      
      By Mantel's theorem, the maximum size of a  triangle-free simple graph on $n_1$ vertices is $\left\lfloor \frac{n_1^2}{4}\right\rfloor$.
      In other words, $ex(n_1, \mathscr{F}_{3,2})=\left\lfloor \frac{n_1^2}{4}\right\rfloor$.            
      Therefore, by Corollary~\ref{cor:LMDex}, we get that $\mathscr{D}(n, k, r)=d^*$ if $n_2 \leq \left\lfloor \frac{n_1^2}{4}\right\rfloor$.
      By Mantel's theorem, we know that $n_1$-vertex simple graphs of size greater than $\left\lfloor \frac{n_1^2}{4}\right\rfloor$
      are not triangle-free, hence are not $\mathscr{F}_{3,2}$-free.
      We sow that, this is also the case for multigraphs; that is, $n_1$-vertex multigraphs  of size greater than $\left\lfloor \frac{n_1^2}{4}\right\rfloor$ are not $\mathscr{F}_{3,2}$-free.
             
      Let $G$ be a maximal $\mathscr{F}_{3,2}$-free multigraph on $n_1$ vertices. 
      By induction on $n_1$, we prove that the size of $G$ is at most $\left\lfloor \frac{n_1^2}{4}\right\rfloor$.
      The assertion clearly holds for $n_1=3$ and $n_1=4$.
      Suppose $G$ has multiple edges between two distinct vertices $u$ and $v$.
      The maximum number of multiple edges between  $u$ and $v$ is two, as otherwise $G$ will not be $\mathscr{F}$-free.
      Also, any vertex $w\notin\{u,v\}$ is not connected to either $u$ or $v$, as otherwise the the graph induced by $\{u,v,w\}$ will have a size of at least $3$.
      Therefore, by induction hypothesis, the maximum size of $G$ will be
      \[
        2+\left\lfloor \frac{(n_1-2)^2}{4}\right\rfloor < \left\lfloor \frac{n_1^2}{4}\right\rfloor,
      \] 
      for $n_1\geq 5$.
    \end{proof}
    
    The following theorem can be similarly derived from Corollary~\ref{cor:LMDex}, and Tur\'an's theorem in extremal graph theory~\cite{Bollobas04}.
    \begin{theorem}
    \label{thm:turan}
      Suppose $k_2={k_1 \choose 2}-1$.
      Then, $\mathscr{D}(n, k, r)=d^*$ if $n_2\leq t_{k_1}(n_1)$,
    where $t_{k_1}(n_1)$ denotes the size of Tur\'an's graph on $n_1$ vertices, and $k_1$ partitions. 
    \end{theorem}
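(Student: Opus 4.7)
The plan is to specialize Corollary~\ref{cor:LMDex} to the given choice $k_2 = {k_1 \choose 2} - 1$ and then invoke Tur\'an's theorem. First, I would observe that a simple graph on $k_1$ vertices with strictly more than $k_2 = {k_1 \choose 2} - 1$ edges must have exactly ${k_1 \choose 2}$ edges, and therefore must be the complete graph $K_{k_1}$. Consequently, among simple graphs, being $\mathscr{F}_{k_1,k_2}$-free is the same as being $K_{k_1}$-free.

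Next, I would cite Tur\'an's theorem: the maximum number of edges in a $K_{k_1}$-free simple graph on $n_1$ vertices equals the size of the corresponding Tur\'an graph, which under the statement's convention is $t_{k_1}(n_1)$. This gives $ex(n_1, \mathscr{F}_{k_1,k_2}) = t_{k_1}(n_1)$, and the desired conclusion $\mathscr{D}(n,k,r) = d^*$ whenever $n_2 \leq t_{k_1}(n_1)$ then follows immediately from Corollary~\ref{cor:LMDex}.

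There is no serious obstacle in this direction; the argument is a direct translation of a classical extremal-graph result through the corollary. The only subtle point is that the theorem is stated as an ``if'' and not an ``iff''. This is because Corollary~\ref{cor:LMDex} uses the simple-graph extremal number $ex(\cdot,\cdot)$ rather than the multigraph number $eX(\cdot,\cdot)$ of Corollary~\ref{cor:LMDeX}. To upgrade the statement to a characterization, one would have to prove a multigraph analogue of Tur\'an's theorem, showing that every $n_1$-vertex multigraph of size strictly greater than $t_{k_1}(n_1)$ contains some member of $\mathscr{F}_{k_1,k_2}$. This is precisely the extra inductive step carried out in the proof of Theorem~\ref{thm:Mantel} for the special case $k_1=3$, $k_2=2$, and generalizing it to arbitrary $k_1$ would be the main obstacle to closing the gap.
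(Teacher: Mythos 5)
Your route matches the paper's exactly: the paper gives no written proof of Theorem~\ref{thm:turan}, only the remark that it ``can be similarly derived from Corollary~\ref{cor:LMDex} and Tur\'an's theorem,'' and your observation that among simple graphs $\mathscr{F}_{k_1,k_2}$-free is the same as $K_{k_1}$-free (since exceeding $\binom{k_1}{2}-1$ edges on $k_1$ vertices forces the complete graph) is precisely the right first step, with Corollary~\ref{cor:LMDex} then finishing it.

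The one detail to get straight is the Tur\'an index. Tur\'an's theorem says $ex(n_1,K_{k_1})$ is attained by the complete $(k_1-1)$-partite Tur\'an graph, so under the convention the theorem announces (that $t_m(n)$ counts $m$ parts) we have $ex(n_1,\mathscr{F}_{k_1,k_2})=t_{k_1-1}(n_1)$, not $t_{k_1}(n_1)$. Indeed, for $n_1\geq k_1$ the $k_1$-partite Tur\'an graph contains $K_{k_1}$ as a subgraph and hence is not $\mathscr{F}_{k_1,k_2}$-free, so $t_{k_1}(n_1)$ cannot be the correct extremal value. Theorem~\ref{thm:Mantel} confirms this reading: there $k_1=3$, $k_2=2=\binom{3}{2}-1$, and the bound $\lfloor n_1^2/4\rfloor$ is the size of the complete \emph{bipartite}, i.e., $(k_1-1)=2$-partite, Tur\'an graph. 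So the ``$k_1$ partitions'' in the paper's statement is apparently an off-by-one slip, and when you write that the corresponding Tur\'an graph ``under the statement's convention is $t_{k_1}(n_1)$'' you are inheriting that slip rather than catching it; the corresponding graph has $k_1-1$ parts, hence is $t_{k_1-1}(n_1)$ in the theorem's own notation. Modulo that index, your argument goes through, and your explanation of why the theorem is stated as ``if'' rather than ``iff'' --- the gap between $ex$ and $eX$, which Theorem~\ref{thm:Mantel} closes only for $k_1=3$ by an extra multigraph induction --- is correct and is the substantive point worth flagging.
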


\subsection{A Note on the Difficulty of LMD}

   As mentioned earlier, the LMD problem is approximable within an additive term of one --- 
   the largest minimum distance is either $d^*$ or $d^*-1$.
   However, as will be discussed here, it appears that LMD is difficult to solve in 
   general\footnote{This may remind the reader of the very few NP-hard problems (e.g., edge coloring~\cite{Holyer81}, 
   and 3-colorability  of planar graphs~\cite{GareyJS76}) that are  approximable within an additive term of one, but are hard to be solved.}.
   In the remaining of this section, we prove that for the special case of $k_2=k_1-1$ the LMD problem is closely connected
   to finding the size of a maximal graph of high girth, a challenging problem in extremal graph theory.
   We start by proving some lemmas first.

    \begin{lemma}
    \label{lem:FkFk}
      We have
       \[
         eX(n, \mathscr{F}_{k, k-1})=ex(n, \mathscr{F}_{k, k-1}),
       \]                                 
       where $n\geq k\geq 1$.
    \end{lemma}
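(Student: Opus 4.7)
The direction $eX(n, \mathscr{F}_{k,k-1}) \geq ex(n, \mathscr{F}_{k,k-1})$ is immediate, since every simple graph is a loopless multigraph and therefore any $\mathscr{F}_{k,k-1}$-free simple graph counts toward $eX$ as well. The real content of the lemma is the reverse inequality: every $\mathscr{F}_{k,k-1}$-free multigraph $G$ on $n \geq k$ vertices admits an $\mathscr{F}_{k,k-1}$-free \emph{simple} graph on the same vertex set with at least $|E(G)|$ edges. My plan is to prove this by induction on the multiplicity excess $M(G) = \sum_{\{u,v\}} \max(\mu(u,v)-1, 0)$: the base case $M(G)=0$ is vacuous, and the inductive step is an edge-swap that decreases $M(G)$ while preserving $|E(G)|$ and $\mathscr{F}_{k,k-1}$-freeness.

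The key structural observation I would establish first is that multi-edges force local sparsity: if $\mu(u,v) = t \geq 2$ in a $\mathscr{F}_{k,k-1}$-free multigraph, then each of $u$ and $v$ has at most $k-3$ distinct neighbors in $V \setminus \{u,v\}$. Otherwise, $k-2$ such neighbors $w_1, \ldots, w_{k-2}$ of $u$ would together with $u$ and $v$ form a $k$-subset carrying at least $t + (k-2) \geq k$ edges, violating the $\mathscr{F}_{k,k-1}$-free property. In particular, the union $N(u) \cup N(v) \setminus \{u,v\}$ has at most $2(k-3)$ vertices, leaving many candidate non-adjacent pairs in $V$ whenever $n \geq k$.

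Using this sparsity, I would perform the following swap: delete one copy of the $uv$-edge and insert a new edge on some pair $\{x,y\}$ with $\mu_G(x,y)=0$. The total number of edges is unchanged, and if $\{x,y\}$ is chosen appropriately then $M$ strictly decreases. The swap preserves $\mathscr{F}_{k,k-1}$-freeness unless some $k$-subset $S$ containing both $x$ and $y$ but not both of $u,v$ is already saturated (has exactly $k-1$ edges in $G$); for all other $k$-subsets the edge count either stays the same or drops by one under the swap. The plan is to choose $\{x,y\}$ from among the (many) candidate non-edges so as to avoid every such ``blocking'' saturated $k$-subset.

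The hard part will be the admissibility argument, i.e.\ proving that at least one candidate pair $\{x,y\}$ is in fact free of blocking saturated $k$-subsets. I expect this to require a careful double-counting/pigeonhole argument that trades the bound on the number of saturated $k$-subsets (obtainable because the extremal $G$ cannot be ``over-saturated'' without contradicting extremality) against the abundance of candidate pairs $\{x,y\}$ furnished by the sparsity lemma, together with a small-case check for the boundary regime $n$ close to $k$ where the entire vertex set is a single $k$-subset and the equality $eX = ex = k-1$ is direct. Once the swap step is verified, iterating it drives $M(G)$ to zero without altering $|E(G)|$, producing an $\mathscr{F}_{k,k-1}$-free simple graph of the same size, so $|E(G)| \leq ex(n, \mathscr{F}_{k,k-1})$, completing the proof.
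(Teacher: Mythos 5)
Your trivial direction ($eX \geq ex$) is fine, and your structural sparsity observation about multi-edges is correct, but the step you defer as the ``hard part'' is not merely hard --- it is false, so the pigeonhole argument you sketch cannot be supplied. Concretely, it is \emph{not} true that an $\mathscr{F}_{k,k-1}$-free multigraph with a multi-edge $uv$ always admits a non-adjacent pair $\{x,y\}$ such that deleting one copy of $uv$ and adding the simple edge $xy$ preserves $\mathscr{F}_{k,k-1}$-freeness. Take $n = k+1$ ($k\geq 4$) and let $G$ be the disjoint union of a $(k-1)$-cycle $C$ on $\{1,\dots,k-1\}$ and a double edge on $\{u,v\}$. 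Then $G$ has $k+1$ edges and is $\mathscr{F}_{k,k-1}$-free; in fact it is extremal, since counting edge-incidences over $k$-subsets gives $eX(k+1,\mathscr{F}_{k,k-1})\leq k+1$. Every non-edge $\{x,y\}$ of $G$ has at least one endpoint on $C$, and after the swap the $k$-set $V(C)\cup\{u\}$ (or $V(C)\cup\{v\}$) carries $(k-1)+1 = k$ edges, violating $\mathscr{F}_{k,k-1}$-freeness. (For $k=3$, take two disjoint double edges on four vertices; every swap produces a $3$-set with $3$ edges.) So your induction on multiplicity excess stalls at the very first step. The obstruction to simplification is not local to the multi-edge, which is why a one-edge swap cannot work: the example $G$ and its simple counterpart, the $(k+1)$-cycle, differ on a non-constant fraction of edges.

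The paper's proof avoids this by a global reconstruction rather than a sequence of local swaps. The key observations are: (i) any connected subgraph on $\geq k$ vertices that uses a multi-edge has $\geq k$ edges, so every connected component of order $\geq k$ in an $\mathscr{F}_{k,k-1}$-free multigraph is already simple; hence all multi-edges live in small components (order $< k$). (ii) By greedily extracting a $k$-vertex connected subgraph that sweeps across the small components, one bounds the total size of the small components by their total order. The paper then deletes all the small components at once and replaces them by a single path (attached by one edge to a surviving large component), producing a simple $\mathscr{F}_{k,k-1}$-free graph of no smaller size. If you want to salvage a swap-based argument, the lesson from the counterexample is that you would need to allow swaps that rearrange whole small components, not single edges --- at which point you have essentially rediscovered the paper's decomposition.
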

    \begin{proof}
    
    \noindent
      We have $eX(n, \mathscr{F}_{k, k-1}) \geq ex(n, \mathscr{F}_{k, k-1})$, because simple graphs are subset of multigraphs.
      Therefore, we just need to show that $eX(n, \mathscr{F}_{k, k-1}) \leq ex(n, \mathscr{F}_{k, k-1})$.
      To this end, we prove that any $\mathscr{F}_{k, k-1}$-free multigraph of order $n$ and size $m$ 
      can be converted to a simple $\mathscr{F}_{k, k-1}$-free graph of order $n$ and size at least $m$.

      Let $G$ be a  $\mathscr{F}_{k, k-1}$-free multigraph of order $n$ and size $m$.
      Suppose $G$ is connected, and assume that $G$ has two vertices $u$ and $v$ connected with multiple edges.
      Then, any connected subgraph of $G$ of order  $k$ will have at least $k$ edges if the subgraph includes $u$ and $v$.
      Therefore, a connected $\mathscr{F}_{k, k-1}$-free graph cannot have multiple edges.
      
      Now suppose that $G$ has $c>1$ connected components denoted $G_i=(V_i, E_i)$, $i\in [c]$.
      Any connected component of order at least $k$ must be a simple graph;
      otherwise, by the above argument, it will not be $\mathscr{F}_{k, k-1}$-free (hence $G$ will not be $\mathscr{F}_{k, k-1}$-free).
      Therefore, if $G$ does not have any connected component of order less than $k$, we are done.
      
      Without loss of generality, suppose $G_i=(V_i, E_i)$, $ i\in [c_1]$, where $c_1 \in [c]$, are the connected components of $G$ that have less than $k$
      vertices. 
      We show that
      \begin{equation}
      \label{equ:EiVi}
        \sum_{i=1}^{c_1}|E_i| \leq \sum_{i=1}^{c_1}|V_i|.
      \end{equation}
      The above inequality clearly holds if 
      \[
        \forall i\in[c_1] \quad |E_i| < |V_i|.
      \]
      If not, we must have $|E_i| \geq |V_i|$ for some connected components $G_j$, $j\in [c_1]$.
%
      Without loss of generality, suppose $|E_i| \geq |V_i|$ for $i \in [c_2]$, where $c_2 \in [c_1]$.         
      Also, assume that $|E_i| - |V_i|\geq |E_j| - |V_j|$ for every $i<j$, where $i,j \in [c_2]$. 
      Note that for the remaining connected components $G_i$, $c_2< i \leq c_1$, 
      we must have $|E_i| = |V_i|-1$.
      
      Let us extract a $k$-vertex subgraph of $G$ in $k$ steps as follows.
      In the first step, we select an arbitrary vertex from $G_1$.
      In every consecutive step, we find a vertex that is connected to at least one of the vertices that we have selected so far, and add that vertex to the set of selected vertices.
      If none exist, we move on to the next connected component $G_2$ and then $G_3$ and so on.
      We continue the above process until we select $k$ vertices.
      
      Let $H$ denote the subgraph induced by the selected $k$ vertices.
      Suppose that $G_t$, $t\in [c_1+1]$ is the last connected graph from which a vertex has been selected.
      The size of $H$ will be at least
      \begin{equation}
      \label{equ:c1c2}
      \begin{split}
        &\sum_{i=1}^{t-1}|E_i|+ \left(k- \sum_{i=1}^{t-1}|V_i| \right)-1 \\
        &= k + \left( \sum_{i=1}^{t-1}|E_i|- \sum_{i=1}^{t-1}|V_i| \right) -1 \\     
      \end{split}        
      \end{equation}         
      If~(\ref{equ:EiVi}) does not hold, then the term $ \left( \sum_{i=1}^{t-1}|E_i|- \sum_{i=1}^{t-1}|V_i| \right)$ in~(\ref{equ:c1c2}) 
      will be at least equal to one.
      This means that the size of $H$ will be at least $k$, which is not possible since $G$ is $\mathscr{F}_{k, k-1}$-free.
      Thus~(\ref{equ:EiVi}) must hold. 
      In the special case, where $k\leq \sum_{i=1}^{c_1}|V_i|$ (i.e., $t\leq c_1$), we must have
      \begin{equation}
      \label{equ:EiViSP}
        \sum_{i=1}^{c_1}|E_i| < \sum_{i=1}^{c_1}|V_i|,
      \end{equation}
      as otherwise the size of $H$ will be at least $k$.

      Let us now construct a $n$-vertex $\mathscr{F}_{k, k-1}$-free simple graph of size at least $m$ from $G$.
      To do so, we replace the connected components $G_i$, $i\in [c_1]$ with a path graph of order $\sum_{i=1}^{c_1}|V_i|$.
      We then connect the path graph (by an edge) to one of the  remaining connected component of $G$ if there is any.
      The new graph $G'$ is a $n$-vertex $\mathscr{F}_{k, k-1}$-free simple graph.
      Also, by~(\ref{equ:EiVi}) and~(\ref{equ:EiViSP}), the order of $G'$ is not less than that of $G$.

%
%
%
%
      
    \end{proof}

    Let $C_k$ denote the cycle of length $k$, and define 
    \[
      \mathscr{C}_k = \{C_3, C_4, . . . , C_k\}.
    \] 
    
    \begin{lemma}
    \label{lem:FkCk}
      We have
      \[
         ex(n, \mathscr{F}_{k, k-1}) =  ex(n, \mathscr{C}_{k}),
      \]       
      where $n\geq k\geq 3$.
    \end{lemma}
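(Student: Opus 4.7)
The plan is to establish both inequalities $ex(n, \mathscr{C}_{k}) \leq ex(n, \mathscr{F}_{k, k-1})$ and $ex(n, \mathscr{F}_{k, k-1}) \leq ex(n, \mathscr{C}_{k})$ separately. The first is immediate: any $\mathscr{C}_{k}$-free simple graph has girth strictly greater than $k$, so any induced subgraph on $k$ vertices is acyclic and therefore a forest with at most $k-1$ edges, which makes the whole graph $\mathscr{F}_{k, k-1}$-free; consequently every extremal $\mathscr{C}_{k}$-free graph is itself $\mathscr{F}_{k, k-1}$-free, yielding $ex(n, \mathscr{C}_{k}) \leq ex(n, \mathscr{F}_{k, k-1})$.

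For the reverse direction, I would show that any $\mathscr{F}_{k, k-1}$-free simple graph $G$ on $n$ vertices admits a $\mathscr{C}_{k}$-free simple graph $G'$ on $n$ vertices with $|E(G')| \geq |E(G)|$. The heart of the argument is that every connected component of $G$ of order at least $k$ is automatically $\mathscr{C}_{k}$-free. Indeed, if such a component $C$ contained a cycle of length $j \leq k$, I would start with the $j$ cycle vertices (whose induced subgraph has at least $j$ edges) and extend iteratively by choosing, at each step, a vertex of $C$ outside the current set that is adjacent to some already-selected vertex; this is always possible because $C$ is connected with at least $k$ vertices, and each new vertex contributes at least one new edge to the induced subgraph. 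After $k - j$ extensions the selected set has size $k$ and induces at least $j + (k - j) = k$ edges, contradicting $\mathscr{F}_{k, k-1}$-freeness.

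For the components of $G$ of order less than $k$, I would invoke the counting already carried out in the proof of Lemma~\ref{lem:FkFk}: letting $V_s$ and $E_s$ denote the total number of vertices and edges contained in these small components, we have $E_s \leq V_s$ in general, with strict inequality $E_s < V_s$ whenever $V_s \geq k$. I would then delete all small components and replace them with a single path on $V_s$ vertices; if $G$ has at least one large component, I would additionally attach this path to it by a single bridging edge, contributing $V_s \geq E_s$ edges in the replaced portion, and otherwise (which forces $V_s = n \geq k$) the bare path alone supplies $V_s - 1 \geq E_s$ edges by the strict inequality. The resulting graph $G'$ is $\mathscr{C}_{k}$-free because paths are acyclic, the added bridge cannot lie on any cycle, and the untouched large components are $\mathscr{C}_{k}$-free by the previous paragraph, so $|E(G')| \geq |E(G)|$ and the reverse inequality follows. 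The main technical obstacle is making sure that the all-small-components case genuinely yields the strict form of the inequality; fortunately the hypothesis $n \geq k$ places us exactly inside the special case already isolated in the proof of Lemma~\ref{lem:FkFk}, so no additional work is needed there.
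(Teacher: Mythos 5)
Your proof is correct, and it follows essentially the same plan as the paper: the easy inclusion is the observation that a $\mathscr{C}_k$-free graph is $\mathscr{F}_{k,k-1}$-free; for the reverse, you show that each connected component of order at least $k$ is automatically $\mathscr{C}_k$-free, and you dispose of the small components by invoking the edge-vertex counting already established in the proof of Lemma~\ref{lem:FkFk} and then replacing them with a long path attached by a bridge.

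There is one place where your write-up is actually tighter than the paper's. The paper asserts inequality~(\ref{equ:CkCk}), namely $ex(n-n',\mathscr{C}_k)+n' \leq ex(n,\mathscr{C}_k)$ for every $0 \leq n' \leq n$, and justifies it by attaching an $n'$-vertex path to an $(n-n')$-vertex $\mathscr{C}_k$-free graph by a bridge. When $n'=n$ there is nothing to bridge to, and indeed the asserted inequality fails for $n'=n=k$ (it would give $k \leq ex(k,\mathscr{C}_k)=k-1$). The conclusion survives because in exactly that situation the \emph{strict} inequality~(\ref{equ:EiViSP}) applies, which the paper's written derivation does not explicitly invoke. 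Your proof handles this correctly by splitting on whether a large component exists and, in the all-small case, using the strict form $E_s < V_s$ to conclude that the bare path already has enough edges. So you have not merely reproduced the argument but closed a small corner case the paper's exposition glosses over.

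One minor presentational remark: you should also note the trivial case in which $G$ has no small components at all (then $G$ itself is already $\mathscr{C}_k$-free and no construction is needed), which you leave implicit; and in the all-small case you should say a word about why such a configuration even falls under the lemma's hypotheses (it does, since $n\geq k$ guarantees $V_s\geq k$ there, triggering the strict inequality). Neither affects correctness.
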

    \begin{proof}
      If a simple graph is $\mathscr{C}_k$-free, it is $\mathscr{F}_{k, k-1}$-free, 
      too\footnote{The converse is not true; there are $\mathscr{F}_{k, k-1}$-free simple graphs that are not $\mathscr{C}_k$-free.}.
      If not, it  has a $k$-vertex subgraph of size at least $k$.
      Such a subgraph must have a cycle of length at most $k$, which contradicts the fact that the graph is $\mathscr{C}_k$-free.
      Therefore, we have
      \[
        ex(n, \mathscr{F}_{k, k-1}) \geq  ex(n, \mathscr{C}_{k}).
      \]            

      
      Let $G=(V, E)$ be a $n$-vertex $\mathscr{F}_{k, k-1}$-free simple graph.
      Any connected component of $G$ of order at least $k$ must be $\mathscr{C}_k$-free.
      It is because, otherwise, any connected $k$-vertex subgraph of that component which includes the cycle will be of size at least $k$.
      If $G$ does not have any connected component of order less than $k$, we are done, because by the above argument, each connected component of $G$ is
      $\mathscr{C}_k$-free, hence $G$ is $\mathscr{C}_k$-free.
      
      Let $G_1=(V_1, E_1), G_2=(V_2, E_2), \ldots, G_c=(V_c, E_c)$ be the $c>1$ connected components of $G$ that have order less than $k$.
      Similar to the proof of Lemma~\ref{lem:FkFk} (Inequality~\ref{equ:EiVi}), we get
      \[
        \sum_{i=1}^{c}|E_i| \leq \sum_{i=1}^{c}|V_i|.
      \]
      Therefore
      \begin{equation}
      \label{equ:ECk}
        |E| \leq ex(n-n', \mathscr{C}_{k})+n',
      \end{equation}
      where $n'=\sum_{i=1}^{c}|V_i|$.
      For any integer $n'$, $0\leq n'\leq n$, we have
      \begin{equation}
      \label{equ:CkCk}
        ex(n-n', \mathscr{C}_{k})+n' \leq ex(n, \mathscr{C}_{k}).
      \end{equation}
      It is because we can make a $n$-vertex $\mathscr{C}_k$-free graph by connecting (using an edge) a $n'$-vertex path graph to a 
      $(n-n')$-vertex $\mathscr{C}_k$-free graph.
      From~(\ref{equ:ECk}) and~(\ref{equ:CkCk}), we get $|E| \leq ex(n, \mathscr{C}_{k})$, which completes the proof.
   \end{proof}
    
    \begin{theorem}
    \label{thm:LMDexLk}
      Let $k_2=k_1-1$, and $k_1\geq 3$. 
      Then, $\mathscr{D}(n, k, r)=d^*$ iff $n_2\leq ex(n_1, \mathscr{C}_{k_1})$.
    \end{theorem}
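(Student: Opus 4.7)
The plan is simple: chain together Corollary~\ref{cor:LMDeX} with Lemmas~\ref{lem:FkFk} and~\ref{lem:FkCk}, specialized to the case $k = k_1$ (so that $k-1 = k_1 - 1 = k_2$).

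First I would invoke Corollary~\ref{cor:LMDeX}, which gives the equivalence $\mathscr{D}(n,k,r) = d^*$ iff $n_2 \leq eX(n_1, \mathscr{F}_{k_1, k_2})$. Since we are in the special case $k_2 = k_1 - 1$, the prohibited family is exactly $\mathscr{F}_{k_1, k_1 - 1}$, so the right-hand side becomes $n_2 \leq eX(n_1, \mathscr{F}_{k_1, k_1-1})$.

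Next I would apply Lemma~\ref{lem:FkFk} with $k = k_1$ and $n = n_1$ (the hypothesis $n_1 \geq k_1 \geq 1$ holds since a $(n,k,r)$-full Tanner graph must contain at least $k_1$ local check nodes, and moreover $k_1 \geq 3$ by assumption), yielding $eX(n_1, \mathscr{F}_{k_1, k_1-1}) = ex(n_1, \mathscr{F}_{k_1, k_1-1})$. Then I would apply Lemma~\ref{lem:FkCk} with $k = k_1 \geq 3$, giving $ex(n_1, \mathscr{F}_{k_1, k_1-1}) = ex(n_1, \mathscr{C}_{k_1})$. Combining these two equalities with the equivalence from Corollary~\ref{cor:LMDeX} delivers the desired statement $\mathscr{D}(n,k,r) = d^*$ iff $n_2 \leq ex(n_1, \mathscr{C}_{k_1})$.

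There is essentially no main obstacle here, since all the heavy lifting has already been done in the preceding lemmas. The only minor care required is verifying that the hypotheses of both lemmas are satisfied: Lemma~\ref{lem:FkFk} requires $n_1 \geq k_1$, and Lemma~\ref{lem:FkCk} requires $k_1 \geq 3$. The first is automatic (otherwise there is no valid code at all to consider, and Theorem~\ref{thm:main} trivially applies), while the second is exactly the stated hypothesis of the theorem.
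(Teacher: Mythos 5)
Your proof is correct and follows exactly the same chain as the paper's own argument: Corollary~\ref{cor:LMDeX}, then Lemma~\ref{lem:FkFk}, then Lemma~\ref{lem:FkCk}. Your additional check that $n_1 \geq k_1$ is a sensible verification the paper leaves implicit, but it does not change the substance of the argument.
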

    \begin{proof}
     By~Corollary~\ref{cor:LMDeX}, $\mathscr{D}(n, k, r)=d^*$ iff $n_2\leq eX(n_1, \mathscr{F}_{k_1, k_1-1})$.
     By~Lemma~\ref{lem:FkFk}, we have $eX(n_1, \mathscr{F}_{k_1, k_1-1})=ex(n_1, \mathscr{F}_{k_1, k_1-1})$.
     Also, Lemma~\ref{lem:FkCk} states that $ex(n_1, \mathscr{F}_{k_1, k_1-1}) =  ex(n_1, \mathscr{C}_{k_1})$, when $k_1\geq 3$.
     Therefore, when $k_1\geq 3$, $\mathscr{D}(n, k, r)=d^*$ iff $n_2\leq ex(n_1, \mathscr{C}_{k_1})$.      
    \end{proof}
    
    Theorem~\ref{thm:LMDexLk} establishes a close connection\footnote{For instance, note that  a polynomial time solution to  $\mathscr{D}(n, k, r)$  
    for the special case $k_2=k_1-1$ results in a  polynomial time solution to
    $ex(n_1, \mathscr{C}_{k_1})$.} between a special case of the LMD problem --- that is the case $k_2=k_1-1$ ---
    and the problem of finding the maximum size of  graphs of girth at least $k_1$. 
    The latter problem is a challenging and long-standing open problem in extremal graph theory.
    For instance, the following conjecture of Erd{\"{o}}s and Simonovits is still one of the main open problems in extremal graph theory.
    \begin{con}{\emph{(Erd{\"{o}}s and Simonovits~\cite{ErdosS82}})}
      For all $k \geq 2$, $ex(n,\mathscr{C}_{2k}) = \theta(n^{1+\frac{1}{k}} )$.
    \end{con}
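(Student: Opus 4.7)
The plan is to attack the two directions of the claimed asymptotic separately. The upper bound $ex(n,\mathscr{C}_{2k}) = O(n^{1+1/k})$ is the classical Bondy--Simonovits even cycle theorem, which I would obtain by a standard moment/BFS argument: starting from an $n$-vertex $\mathscr{C}_{2k}$-free graph $G$ with more than $c\, n^{1+1/k}$ edges, one first passes to a subgraph $G'$ whose minimum degree is at least half the average degree. From an arbitrary root $v \in G'$, one counts walks of length $k$ starting at $v$ in two ways. A convexity bound on the second moment of the degree sequence in the BFS layering forces two such walks to end at a common vertex while being otherwise internally disjoint from $v$; concatenating them yields an even closed walk of length at most $2k$, and hence a cycle of length in $\{3,4,\dots,2k\}$, contradicting $\mathscr{C}_{2k}$-freeness.

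For the matching lower bound $ex(n,\mathscr{C}_{2k}) = \Omega(n^{1+1/k})$, the plan is to exhibit, for each $k$, a family of graphs of girth at least $2k+2$ with the right edge count. For $k \in \{2,3,5\}$ I would take the bipartite incidence graphs of the generalized $(k+1)$-gons over $\mathbb{F}_q$ (the projective plane, the generalized quadrangle, and the generalized hexagon, respectively): these are $q$-regular, have $\Theta(q^k)$ vertices and $\Theta(q^{k+1})$ edges, with girth exactly $2k+2$, matching the conjectured exponent $1+1/k$. For $k=4$ and $k\geq 6$, the plan would be either to uncover a new algebraic incidence structure --- for example one coming from polynomial equations over $\mathbb{F}_q$ in the spirit of Wenger or of norm graphs --- or else to use the alteration method, starting from $G(n,p)$ at the appropriate density and deleting one edge from each short cycle.

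The hard part --- and the reason this has been open since the 1980s --- is precisely the lower bound for $k \notin \{2,3,5\}$. By the Feit--Higman theorem, finite generalized $m$-gons with more than two points on a line exist only for $m \in \{3,4,6,8\}$, so the most natural geometric source of dense high-girth bipartite graphs is simply unavailable for the missing values of $k$. A direct random/alteration argument only yields $ex(n,\mathscr{C}_{2k}) = \Omega(n^{1+1/(2k-1)})$, which falls short of the conjectured $n^{1+1/k}$ by a power of $n$. I would therefore expect that any realistic attack must introduce a genuinely new construction principle --- perhaps an arithmetic or geometric object whose defining equations can be shown, via a non-trivial identity over a finite field, to forbid cycles of every length up to $2k$ while still producing $\Theta(n^{1+1/k})$ edges. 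Absent such a breakthrough, the plan cannot be completed in full generality, which is precisely the current state of the art and the reason this statement is posed as a conjecture rather than a theorem.
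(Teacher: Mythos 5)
The statement you were asked to prove is labeled as a \emph{conjecture} in the paper (attributed to Erd\H{o}s and Simonovits), not a theorem, and the paper gives no proof of it --- it is cited precisely to illustrate that the special case $k_2 = k_1 - 1$ of LMD is tied to a long-standing open problem. You have correctly diagnosed this: your proposal accurately identifies the upper bound $O(n^{1+1/k})$ as the Bondy--Simonovits even-cycle theorem, correctly notes that the matching lower bound is known only for $k \in \{2,3,5\}$ via incidence graphs of generalized polygons, correctly invokes Feit--Higman to explain why that construction cannot be pushed further, and correctly observes that the alteration method falls a power short. Your honest conclusion that the argument cannot be completed for general $k$ is exactly the current state of the art, which is why the paper poses it as a conjecture rather than proving it. There is nothing in the paper to compare against, and your survey of what is and is not known is accurate.
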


\subsection{LMD and Graph Theory}   
    In the previous sections, we discussed the connection between LMD and extremal graph theory.
    This connection, as showed, can be used to solve LMD for more special cases, or recognize cases that are difficult to solve.
    Theorem~\ref{thm:main}  does not limit us to use existing results in extremal graph theory to challenge LMD.
    It also allows us to use tools from the general field of graph theory to tackle LMD.
    As an example, let us solve another special instance of LMD, where  $n_1-k_1= 1$.\footnote{The case $n_1-k_1= 1$ 
    holds for typical range of practical LRCs, as well as LRCs with almost optimal rate; for $(n,k,r)$-LRCs we have $\frac{k}{n}\leq \frac{r}{r+1}$~\cite{Gopalan12}.}   
    To this end, we use some basic results from 
    graph realization\footnote{Similar approach/tools can be used to extend this result to $n_1-k_1\leq 3$.}.

    A sequence $d = \langle d_1 , . . . , d_n \rangle$ of non-negative integers is called \emph{graphic} if it is the degree sequence of some multigraph~$G$. 
    Such a multigraph $G$ is called a \emph{realization} of sequence $d$.  
    Degree sequences of simple graphs are well-understood --- they can be efficiently recognized~\cite{Erdos60} and realized~\cite{Hakimi62}.
    Following is a general realizability test for multigraphs.

    \begin{lemma}{\emph{(Harary~\cite{Harary})}}
    \label{lem:Harary}
        The sequence $d = \langle d_1 , . . . , d_n \rangle$, where $d_1 = \max(d)$, is graphic iff $\sum_{i=1}^{n} d_i$ is even and 
        $d_1 \leq \sum_{i=2}^{n} d_i$.    
    \end{lemma}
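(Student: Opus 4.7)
The plan is to prove Harary's lemma by induction on the total sum $S=\sum_{i=1}^{n} d_i$, treating the two directions separately. For the necessity direction, I would invoke the handshake lemma (each edge contributes $2$ to the degree sum) to conclude that $S$ is even, and then observe that in any loopless realization each edge incident to the max-degree vertex has its other endpoint among the remaining $n-1$ vertices; hence every edge counted by $d_1$ contributes at least one unit to $\sum_{i\geq 2} d_i$, yielding $d_1\leq \sum_{i\geq 2} d_i$.

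For the sufficiency direction, I would proceed by strong induction on $S$, reordering so that $d_1\geq d_2\geq \cdots \geq d_n$. The base $S=0$ is the edgeless multigraph. For the inductive step, note that $S\geq 2$ combined with $d_1\leq \sum_{i\geq 2} d_i$ and $d_1\geq 1$ forces $d_2\geq 1$, so I can produce a reduced sequence $d'$ by decreasing both $d_1$ and $d_2$ by one (intending to add an edge between vertices $1$ and $2$ later). The parity condition $\sum d_i'$ even is immediate since $S$ drops by $2$. The heart of the argument is to verify the max-bound-by-sum inequality for $d'$.

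I would split this verification into two cases. In the generic case $d_3\leq d_1-1$, the new maximum is $d_1-1$, and the required inequality $d_1-1\leq (d_2-1)+\sum_{i\geq 3} d_i$ simplifies to the original hypothesis $d_1\leq \sum_{i\geq 2} d_i$. In the degenerate case $d_3=d_1$, necessarily $d_1=d_2=d_3$, and the new maximum is $d_3=d_1$, so the inequality demands $2d_1+2\leq S$. Here the original hypothesis only yields the weaker $2d_1\leq S$, so I would close the gap by observing that $S\geq d_1+d_2+d_3=3d_1>2d_1$ together with the fact that both $S$ and $2d_1$ are even forces $S\geq 2d_1+2$. With the reduced sequence verified, the inductive hypothesis provides a realization $G'$, and adding one edge between vertices $1$ and $2$ (legal since parallel edges are permitted) yields a realization of $d$.

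The main subtle point I anticipate is the degenerate case where the maximum degree is attained at three or more vertices: the direct inequality gives only a one-unit shortfall, and the argument essentially requires both parity of $S$ and the strict separation $S>2d_1$ to succeed simultaneously. Once this case is dispatched, the rest of the induction is routine bookkeeping.
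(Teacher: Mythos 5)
The paper does not prove this lemma; it is quoted directly from Harary's book as a known realizability criterion, so there is no paper proof to compare against. Your argument is, however, a correct self-contained proof, and it follows the standard line of attack. The necessity direction is immediate from the handshake lemma and the looplessness constraint exactly as you state. The sufficiency induction is sound: with the sequence sorted, $S\geq 2$ and $d_1\leq\sum_{i\geq 2}d_i$ force $n\geq 2$ and $d_2\geq 1$, so the reduction $(d_1,d_2)\mapsto(d_1-1,d_2-1)$ is legal; in the generic case $d_3\leq d_1-1$ the required inequality for $d'$ reduces verbatim to the hypothesis on $d$; and in the degenerate case $d_1=d_2=d_3$ your parity bootstrap ($S\geq 3d_1>2d_1$, both $S$ and $2d_1$ even, hence $S\geq 2d_1+2$) is exactly the right way to close the one-unit gap. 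Reattaching a single parallel-edge-permitting edge between vertices $1$ and $2$ to the realization of $d'$ then completes the step. One remark on presentation: you could streamline by observing that the reduced sequence $d'$ has $\max(d')\in\{d_1-1,\,d_1\}$ in all cases, and in either case the inequality $2\max(d')\leq S-2$ is what must be checked; this unifies your two cases into a single line of bookkeeping followed by the parity argument for the $\max(d')=d_1$ subcase. But as written the proof is correct and complete.
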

   We call a multigraph \emph{almost regular} if the degrees of its vertices differ by at most one.
   The following corollary is a direct result of Lemma~\ref{lem:Harary}.
   \begin{cor}
     \label{cor:reg}
     For any integers $n\geq 2$ and $m\geq 0$ there exists an almost-regular multigraph of order $n$ and size $m$.
   \end{cor}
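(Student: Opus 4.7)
The plan is to construct the natural almost-regular degree sequence explicitly and then invoke Harary's theorem (Lemma~\ref{lem:Harary}) to realize it as a loopless multigraph; since any realization of a sequence whose entries take at most two consecutive values is automatically almost-regular, this immediately yields the desired graph.

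Concretely, I would set $q = \lfloor 2m/n \rfloor$ and $s = 2m - nq \in \{0, 1, \ldots, n-1\}$, and consider the sequence $d$ consisting of $s$ entries equal to $q+1$ and $n-s$ entries equal to $q$. The sum of the entries is $s(q+1) + (n-s)q = nq + s = 2m$, which is even, so the parity condition of Lemma~\ref{lem:Harary} is satisfied by construction. It then remains only to verify the inequality $d_1 \le \sum_{i \ge 2} d_i$, which for this sequence is equivalent to $d_1 \le m$.

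The verification of $d_1 \le m$ splits into two short cases. If $s = 0$, then $d_1 = q = 2m/n \le m$ since $n \ge 2$. If $s > 0$, a brief check rules out $n = 2$ (because $n = 2$ forces $s = 2m - 2m = 0$), so $n \ge 3$, and one obtains $d_1 = q + 1 \le \lfloor 2m/3 \rfloor + 1 \le m$ for every $m \ge 1$. The only point requiring any care is the boundary case $n = 2$, where the almost-regular realization degenerates to the two-vertex multigraph with $m$ parallel edges and Harary's bound becomes tight; beyond this, no genuine obstacle arises, so the corollary really is a direct application of Lemma~\ref{lem:Harary}.
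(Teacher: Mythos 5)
Your proof is correct and takes essentially the same approach as the paper: both construct the degree sequence with $s = 2m \bmod n$ entries equal to $\lceil 2m/n\rceil$ and $n-s$ equal to $\lfloor 2m/n\rfloor$, then invoke Lemma~\ref{lem:Harary}. You additionally spell out the verification that $d_1 \le \sum_{i\ge 2} d_i$, which the paper asserts without detail.
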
    
   \begin{proof}
     Let $t= (2m \mod n)$. 
     The following degree sequence satisfies the conditions of Lemma~\ref{lem:Harary}, hence is realizable.
     \[
       \langle d_1=\left\lceil \frac{2m}{n} \right\rceil, \ldots, d_t=\left\lceil \frac{2m}{n} \right\rceil, 
       d_{t+1}=\left\lfloor \frac{2m}{n} \right\rfloor \ldots d_n= \left\lfloor \frac{2m}{n} \right\rfloor \rangle
     \]
     Note that $\sum_{i=1}^{n} d_i=2m$.
     Therefore, a realization of the above degree sequence is an almost-regular multigraph of order $n$ and size $m$

   \end{proof}

  
  
  \begin{theorem}
  \label{thm:real}
    Suppose $n_1-k_1=1$. 
    Then, $\mathscr{D}(n, k, r)=d^*$ iff
    \[
      n_2-\left\lfloor \frac{2n_2}{n_1} \right\rfloor \leq k_2.
    \]
  \end{theorem}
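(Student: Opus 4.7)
The plan is to invoke Theorem~\ref{thm:main} to reformulate the statement in purely graph-theoretic terms and then exploit the fact that, when $n_1-k_1=1$, the family of ``forbidden'' subgraphs collapses to a single, very simple family: the vertex-deleted subgraphs.

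First I would apply Theorem~\ref{thm:main}: proving $\mathscr{D}(n,k,r)=d^*$ amounts to proving the existence of an $n_1$-vertex multigraph $G$ of size $n_2$ such that no $k_1$-vertex subgraph has size greater than $k_2$. Since $k_1=n_1-1$, every $k_1$-vertex subgraph of $G$ is obtained by deleting exactly one vertex $v$, and such a subgraph has size $n_2-\deg_G(v)$. Therefore the forbidden-subgraph condition is equivalent to
\[
n_2-\deg_G(v)\leq k_2 \quad \text{for every vertex } v,
\]
i.e. to $\delta(G)\geq n_2-k_2$, where $\delta(G)$ denotes the minimum degree of $G$.

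Next I would observe that, by the handshake lemma, any multigraph of order $n_1$ and size $n_2$ satisfies $\delta(G)\leq \lfloor 2n_2/n_1\rfloor$, so existence of the required $G$ implies $n_2-k_2\leq \lfloor 2n_2/n_1\rfloor$, which is the claimed inequality. For the converse I would use Corollary~\ref{cor:reg}: when $n_2-\lfloor 2n_2/n_1\rfloor\leq k_2$, an almost-regular multigraph of order $n_1$ and size $n_2$ exists, and its minimum degree is exactly $\lfloor 2n_2/n_1\rfloor\geq n_2-k_2$, as required.

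There is no real obstacle here. The only small points to check are that the reduction via Theorem~\ref{thm:main} applies without side conditions (it does, since $n_1-k_1=1$ is covered) and that the almost-regular multigraph furnished by Corollary~\ref{cor:reg} does indeed attain the optimal minimum degree $\lfloor 2n_2/n_1\rfloor$, which is immediate from its construction. The entire argument therefore reduces to a one-line application of Theorem~\ref{thm:main}, the handshake lemma, and Corollary~\ref{cor:reg}.
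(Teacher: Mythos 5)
Your proof is correct and takes essentially the same approach as the paper: reduce via Theorem~\ref{thm:main}, observe that when $n_1-k_1=1$ the $k_1$-vertex subgraphs of maximum size are the vertex-deleted induced subgraphs, use the handshake lemma for necessity, and invoke Corollary~\ref{cor:reg} to exhibit an almost-regular multigraph for sufficiency. Your reformulation in terms of the minimum-degree condition $\delta(G)\geq n_2-k_2$ is a slightly cleaner way to package the same argument.
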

  \begin{proof}
    Let $G$ be a multigraph of order $n_1$ and size $n_2$.
    Since $G$ has $n_2$ edges, it must have a vertex $v$ of degree at most $\left\lfloor \frac{2n_2}{n_1} \right\rfloor$.
    Removing $v$ from $G$ we get a $k_1$-vertex subgraph of $G$ of size at least $n_2-\left\lfloor \frac{2n_2}{n_1} \right\rfloor$.
    Since $G$ is $\mathscr{F}_{k_1, k_2}$-free, we must have
    \begin{equation}
    \label{equ:n2-k2}
      n_2-\left\lfloor \frac{2n_2}{n_1} \right\rfloor \leq k_2,
    \end{equation}
%
    Now, suppose~(\ref{equ:n2-k2}) holds.
    Let $G$ be an almost-regular multigraph of order $n_1$ and size $n_2$.
    By Corollary~\ref{cor:reg}, such multigraph $G$ exists.
    Let $H$ be a $k_1$-vertex subgraph of $G$ obtained by removing a vertex $v$ from $G$.
    Since $G$ is an almost-regular graph, the degree of $v$ is at least equal to $\left\lfloor \frac{2n_2}{n_1} \right\rfloor$, thus
    the size of $H$ is at most $n_2- \left\lfloor \frac{2n_2}{n_1} \right\rfloor$ which by~(\ref{equ:n2-k2}) is bounded by $k_2$.
    This implies that $G$ is $\mathscr{F}_{k_1, k_2}$-free.
  \end{proof}

%
%
  
  There is an infinit range of code parameters for which the existing results in the literature cannot solve LMD but Proposition~\ref{equ:n2-k2} does.
  This range includes 
  \[
    n_2>k_2\geq k_1\geq 3 \quad\&\quad k_2\geq n_2-\left\lfloor \frac{2n_2}{n_1} \right\rfloor \quad\&\quad n_1=k_1+1.
  \]
  For example, some $(n,k,r)$-LRCs that fall within this range are 
  $(16, 9, 4)$, $(19, 12, 5)$, $(19, 11, 5)$, $(22, 14, 6)$, and $(22,13,6)$.
  


\section{Conclusion and Future Research} 
\label{sec:Con}
  We studied the problem of finding the largest possible minimum distance of LRCs, a problem we referred to as LMD.
  We converted LMD to an equivalent simply stated graph theory problem.
  Using this result, we showed how to easily derive and extend the existing results in the literature.
  Also, using tools from graph theory we solved LMD for more cases.
  Finally, we established a connection between an instance of LMD and a well-known open problem in extremal graph theory;
  an indication that LMD is perhaps difficult to be fully solved.
  
  As future research, this work can be extended to LRCs with multiple recovering sets such as those considered in~\cite{PrakashKLK12, WangZ14, TamoBF16, RawatPDV16}.
  Another direction to extend this work is to find a deterministic code construction over finite fields of small order (i.e. of order $\mathcal{O}(n)$ instead of $\mathcal{O}(n^{d^*})$) 
  when optimal LRCs is proven to exist.
  Also, there are a number of questions that remains open. 
  For example, all the solved instances of LMD in the literature and in this paper have a corresponding almost-regular multigraph solution.
  For instance, forrests with equally sized trees, cycles,  Tur\'an's graphs (which are all almost-regular graphs) give solutions to various cases of LMD.
  An interesting question is whether or not every solution of LMD has a corresponding almost-regular multigraph solution.
  If so, future research may focus on such graphs.
  Another interesting question is whether or not $eX(n_1, \mathscr{F}_{k_1, k_2}) = ex(n_1, \mathscr{F}_{k_1, k_2})$ when $k_2\leq {k_1 \choose 2}$.
  In this work, we proved this for some special cases, e.g. when $k_2\leq k_1-1$.

\bibliographystyle{IEEEtran}
\bibliography{references}

\clearpage
\begin{appendices}

\section{Proof of Theorem~\ref{thm:k_2k_1minus1} }
\label{app:k_2k_1minus1}

     Let $\mathscr{F}_{k_1, k_2}$ be the set of all $k_1$-vertex multigraphs of size greater than $k_2$.     
     We say a graph $G$ is $\mathscr{F}_{k_1, k_2}$-free if $G$ does not have a subgraph of order $k_1$ and size greaters than $k_2$.
     We first prove a necessary and sufficient condition for a $n_1$-vertex forest to be  $\mathscr{F}_{k_1, k_2}$-free, when $k_2<k_1-1$.
     Then, we show that this condition applies to all multigraphs on $n_1$ vertices.
     
     Let $G$ be a forest on $n_1$ vertices. 
     Let $t\geq 1$ be the minimum number of connected components of $G$ that are needed to collect
     $k_1$ vertices. 
     The maximum size of a $k_1$-vertex subgraph of $G$ is then exactly $k_1-t$.
     Therefore, $G$ is  $\mathscr{F}_{k_1, k_2}$-free, iff $k_1-t\leq k_2$, or equivalently $t\geq k_1-k_2$. 
     Note that, by the above argument, only the order of the connected components of $G$ determines whether or not $G$ is  $\mathscr{F}_{k_1, k_2}$-free.
     Thus, we can safely assume that each connected component of $G$ (which is a tree) is a path graph.     
     
     If the order of two connected components of $G$ differ by at least two, 
     we can remove one vertex from one end of the larger connected component (which is a path) 
     and add one vertex and connect it with an edge to one end of the smaller connected component. 
     If $G$ is $\mathscr{F}_{k_1, k_2}$-free, so is the new forest --- the value of $t$ for the new forest is not smaller than that for $G$.
     Therefore, in pursuing a necessary condition for a forest to be  $\mathscr{F}_{k_1, k_2}$-free, 
     we can safely assume that $G$ is  a forest with almost equally sized trees, where each tree is a path graph.
     
     Suppose $G$ has $c$ connected components (thus, $n_2=n_1-c$).
     Since the connected components of $G$ are almost equally sized, and the total number of vertices in any $k_1-k_2-1$ connected components of $G$ is at most $k_1-1$,
     we can have at most $A=(k_1-1) \mod (k_1-k_2-1)$ connected components of order $\left\lceil \frac{k_1}{k_1-k_2-1} \right\rceil$,
     and $B=c-A$ connected components of order $\left\lfloor \frac{k_1}{k_1-k_2-1} \right\rfloor$.
     Thus,
     \[
       n_1 \leq A\cdot \left\lceil \frac{k_1}{k_1-k_2-1} \right\rceil+B\cdot\left\lfloor \frac{k_1}{k_1-k_2-1} \right\rfloor,
     \]
     which is simplified to
     \[
       n_1\leq (k_1-1)+(c-(k_1-k_2-1))\left\lfloor \frac{k_1}{k_1-k_2-1} \right\rfloor.
     \]
     This yields
     \[
       c \geq  \left\lceil \frac{n_1-k_1+1}{\left\lfloor \frac{k_1}{k_1-k_2-1}\right\rfloor} \right\rceil +k_1-k_2-1,
     \]
     from which we get
     \begin{equation}
     \label{equ:n2n1}
       n_2\leq n_1- \left( \left\lceil \frac{n_1-k_1+1}{\left\lfloor \frac{k_1}{k_1-k_2-1}\right\rfloor} \right\rceil +k_1-k_2-1\right)
     \end{equation}     
     because $n_2=n_1-c$.
     Note that if (\ref{equ:n2n1}) holds, we can divide $n_1$ vertices into 
     \[
       c= \left\lceil \frac{n_1-k_1+1}{\left\lfloor \frac{k_1}{k_1-k_2-1}\right\rfloor} \right\rceil +k_1-k_2-1
     \]
     groups such that the total sum of vertices in every $k_1-k_2-1$ groups is at most $k_1-1$.
     Therefore,  (\ref{equ:n2n1}) is both necessary and sufficient to have a $\mathscr{F}_{k_1, k_2}$-free forest of order $n_1$ and size $n_2$.
     
     Now, let us cover the case where $G$ is $\mathscr{F}_{k_1, k_2}$-free but not a forest.
     We first convert $G$ into a forest $G'$ of the same order and size as $G$.
     Then, we prove that $G'$ is $\mathscr{F}_{k_1, k_2}$-free.
     This will imply the bound (\ref{equ:n2n1}), and conclude the proof.
     
     Let $G_1=(V_1, E_1), G_2=(V_2, E_2), \ldots, G_c=(V_c, E_c)$ be the connected components of $G$.
     Suppose that the fist $c_1\geq 1$ connected components of $G$ are not tree, that is  $|E_i| \geq |V_i|$
     for every $ i\in [c_1]$.
     Since the remaining components are tree, we have $|E_i| = |V_i|-1$ for $c_1< i \leq c$.     
     Let $t$ be the smallest integer for which we have
     \[
       \sum_{i=1}^{t}|E_i| = \sum_{i=1}^{t}|V_i|-1.
     \]
     Since $G$ is $\mathscr{F}_{k_1, k_2}$-free, such $t$ must exist.     
     Note that for any integer $h$, $1\leq h\leq  \sum_{i=1}^{t}|V_i|-1$, 
     the first $t$ connected components of $G$ (i.e. $G_1, G_2, \ldots, G_t$) have a $h$-vertex subgraph of size at least $h-1$.

     Let us now change $G$ to a forest $G'$ by replacing the first $t$ connected components of $G$ with a path graph of
     order  $\sum_{i=1}^{t}|V_i|$ and size $ \sum_{i=1}^{t}|E_i|$.
     Towards showing a contradiction, assume that $G'$ has a subgraph $H'$ of order $k_1$ and size greater than $k_2$.
     Suppose $h$ vertices of $H'$ are from the path graph added.
     We replace these $h$ vertices with $h$ vertices from the first $t$ connected component of $G$ that induce a subgraph of size at least $h-1$.
     These new set of $h$ vertices together with the $k_1-h$ remaining vertices of $H'$ induce a $k_1$-subgraph of size greater than $k_2$ in $G$.
     This is a contradiction because $G$ is $\mathscr{F}_{k_1, k_2}$-free.

\section{Proof of Proposition~\ref{prp:nkrLRC} }
\label{app:nkrLRC}

    Let $\mathcal{T}$ be a $(n,k,r)$-Tanner graph with minimum distance $d^*$.
    Recall that a Tanner graph determines the zero elements of code's parity-check matrix.
    Let $\mathbf{H}_{(n-k)\times n}$ be a parity-check matrix whose zero elements are set by $\mathcal{T}$, and the non-zero elements are chosen uniformly 
    at random from $GF(q)$. 
    Let $V$ be any set of $d^*-1$ variable nodes.
    By Definition~\ref{def:MD} and Hall's theorem~\cite{Bondy76}, we get that there is a perfect matching between $V$ and a set of $d^*-1$ check nodes, denoted $C$.
    Let $\mathbf{h}$ be the submatrix of $\mathbf{H}$ whose rows and columns correspond  to the sets $C$ and $V$, respectively.
    Using the Schwartz-Zippel theorem we get that the determinant of matrix $\mathbf{h}$ is non-zero with probability at least
    $1-\frac{d^*-1}{q}$. In other words, the~\mbox{$d^*-1$} failures corresponding to variable nodes $V$ are recoverable with probability at least $1-\frac{d^*-1}{q}$.
    There are in total ${n \choose d^*-1}$ of possible $d^*-1$ node failure combinations.
    By the union bound, the probability that any set of $d^*-1$ failures are recoverable is at least
    \[
      1-\frac{d^*-1}{q}{n \choose d^*-1},
    \]
    which is positive if $q> (d^*-1){n \choose d^*-1}$.
    Therefore, there exists a $(n,k,r)$-LRC with minimum distance $d^*$.
    
    Now let us prove the converse.
    Suppose there is a  $(n,k,r)$-LRC with minimum distance $d^*$.   
    Let $\mathbf{H}_{(n-k)\times n}$ be a parity-check matrix of the LRC that has the maximum number of rows with Hamming distance of at most $r+1$.
    Let $\mathcal{T}$ be the $(n,k,r)$-Tanner graph corresponding to $\mathbf{H}$.
    Note that every variable node in $\mathcal{T}$ must be adjacent to at least one local check node; otherwise, by the construction of  $\mathbf{H}$, we 
    get that the code's locality is greater than $r$.
    Since the code's minimum distance is $d^*$, every $1\leq \tau \leq d^*-1$ variable nodes must be adjacent to at least $\tau$ different check nodes;
    otherwise, the corresponding $\tau$ failures are not recoverable.
    Equivalently, for every $\eta \in [n-k-d^*+2 , n-k]$, every set of $\eta$ check nodes are adjacent to at least $\eta + k$ variable nodes.
    Therefore, by Definition~\ref{def:MD}, the minimum distance of $\mathcal{T}$ is at least $d^*$.
    This implies that he minimum distance of $\mathcal{T}$ is exactly $d^*$; otherwise, by the first part of this proof,
    there exists a $(n,k,r)$-LRC with minimum distance greater than $d^*$, which is not possible.

\end{appendices}

\end{document}